\newtheorem{theorem}{Theorem}
\newtheorem{lemma}[theorem]{Lemma}
\newtheorem{cor}[theorem]{Corollary}
\newtheorem{prop}[theorem]{Proposition}
\newtheorem{conjecture}[theorem]{Conjecture}
\newtheorem{case}{Case}
\theoremstyle{definition}
\newtheorem{defn}[theorem]{Definition}
\newtheorem{example}[theorem]{Example}
\newtheorem{remark}[theorem]{Remark}
\newtheorem{question}[theorem]{Question}
\newtheorem{problem}[theorem]{Problem}
\numberwithin{theorem}{section}
\DeclareMathOperator{\tr}{tr}
\DeclareMathOperator{\Sym}{Sym}
\DeclareMathOperator{\Mat}{Mat}
\DeclareMathOperator{\Spec}{Spec}
\DeclareMathOperator{\IMPS}{uMPS}
\newcommand{\IMPSbis}{\normalfont{\textrm{uMPS}}}
\DeclareMathOperator{\Cyc}{Cyc}
\DeclareMathOperator{\diag}{diag}
\DeclareMathOperator{\Span}{span}
\DeclareMathOperator{\jn}{join}
\def\CC{\mathbb{C}}
\def\RR{\mathbb{R}}
\def\KK{\mathbb{K}}
\def\PP{\mathbb{P}}
\def\ot{\otimes}
\providecommand{\keywords}[1]{\textbf{Keywords} #1}
\providecommand{\msc}[1]{\textbf{Mathematics Subject Classification (2000)}  #1}
\title{Uniform matrix product states from an algebraic geometer's point of view}
\author{Adam Czapli\'nski, Mateusz Micha\l{}ek, Tim Seynnaeve}
\begin{document}
\selectlanguage{english}
\maketitle

\begin{abstract}
We apply methods from algebraic geometry to study uniform matrix product states. Our main results concern the topology of the locus of tensors expressed as $\IMPS$, their defining equations and identifiability. By an interplay of theorems from algebra, geometry and quantum physics we answer several questions and conjectures posed by Critch, Morton and Hackbusch.
\end{abstract}

\keywords{uniform matrix product states, trace algebra, trace parametrization, $W$-state, injectivity radius}
 
\vspace*{2ex}

\msc{15A69, 15A90}

\tableofcontents

\section{Introduction}
Matrix product states and uniform matrix product states play a crucial role in quantum physics and quantum chemistry \cite{PerezGarciaVerstraete, QuantumWielandt, orus2014practical, schollwock2011density, hackbusch2012tensor, ye2018tensor}.
They are used, for instance, to compute the eigenstates of the Schr\"odinger equation. 
Matrix product states provide a way to represent special tensors in an efficient way and uniform matrix product states are partially symmetric analogs of matrix product states. 

We introduce new algebraic methods to answer problems and questions coming from this area. 
In particular, we answer several questions and conjectures posed by Critch, Morton and Hackbusch.
Our main emphasis is on interactions among algebraic geometry and matrix product states. While secant varieties and border rank already form a well-established topic within algebraic geometry \cite{landsberg2012tensors,landsberg2017geometry,iarrobino1999power, zak2005tangents}, tensor networks and matrix product states so far have only received very limited attention (with notable exceptions \cite{landsburg2012geometry, CritchMorton, PEPS}). However, we are sure that this situation is changing.

One of the main goals of our paper is to study geometric and topological properties of the uniform matrix product states $\IMPS(D,d,N)$ and the Zariski closure $\overline{\IMPS(D,d,N)}$, which in case of complex numbers coincides with the Euclidean closure.
For the precise definition, see Definition \ref{def:uMPS}.

As an application of our methods we confirm two conjectures of Critch and Morton \cite{CritchMorton}.
Precisely, one asserts that, under mild assumptions, for two $2\times 2$ matrices, matrix product states are ``identifiable'' - details are explained later. The main ingredient of our proof is the so-called \emph{fundamental theorem of uniform matrix product states}. 
We can also verify the conjecture concerning the defining ideal of the closure of the $\IMPS$ in special cases. 

Further, using representation theory we are able to provide a full description of the $\IMPS$ in the case of products of length $4$ of two $2\times 2$ matrices. So far only the defining equation of the variety was known - we provide a full description of the dense, proper subset $\IMPS(2,2,4)$.
This is related to the probabilistic graphical models known as hidden Markov models  and to the conjecture of Bray-Morton-Sturmfels \cite{BrayMortonHMM}.
A variant of this conjecture states that for any fixed $D$ and $d$, the ideal of $\IMPS(D,d,N)$ is generated by quadrics for $N$ large enough.
One of the tools we use is the \emph{trace algebra} \cite{Procesi,Sibirskii}. The applications of trace algebras to the theory of matrix product states were already investigated by Critch and Morton \cite{CritchMorton}. 
We show how this method can be used to derive the conjectured description of the ideal of $\overline{\IMPS(2,2,5)}$. Further, we provide a full description of the ideal of $\overline{\IMPS(2,2,6)}$.	%{\color{red} Overlap with previous paragraph}

%Moreover, we describe an algebraic method to prove that every tensor can be expressed as a $\IMPS$. This is based on linear 
%systems that have no base locus. We describe also a very useful
%surjectivity criterion for polynomial maps. {\color{red} This seems saying twice the same...} We apply it to show that $\IMPS(3,2,4)$ fills the ambient space of cyclic tensors. Our method is very general and we believe it can be used in many other cases beyond matrix product states.
Moreover, we describe a useful surjectivity criterion for polynomial maps, which can be used to prove that every tensor can be expressed as a $\IMPS$. We apply it to show that $\IMPS(3,2,4)$ fills the ambient space of cyclic tensors. Our method is very general and we believe it can be used in many other cases beyond matrix product states.

The very important questions of the closedness of families of tensors that allow
representations as matrix product states were asked by W. Hackbusch and L.~Grasedyck (cf.~\cite{landsburg2012geometry, HarrisMichalekSertoz}).
One of the questions was, when $\IMPS(D,d,N)$ and $\overline{\IMPS(D,d,N)}$ may differ. We answer the question when $\IMPS(D,d,N)$ is closed, i.e.~both sets are equal, in the case $D=2$. Further, we provide an explicit tensor (the so-called \emph{$W$-state}) which is always in $\overline{\IMPS(D,d,N)}$, but not in $\IMPS(D,d,N)$ when $N$ is large compared to $D$, providing many instances where $\IMPS(D,d,N)$ is not closed. Making this precise requires investigating the so-called injectivity radius.
%We study the so-called injectivity radius and its connections with the closedness of the uniform matrix product states.
Moreover, we study the dimension of $\overline{\IMPS(D,d,N)}$ and the connectedness in a more general set-up.
In the table below we collect our results concerning closedness of $\IMPS(D,2,N)$. The second row is Theorem \ref{ClosedComplex}, the red F is Example \ref{eg:324fills}.
 
 \newcommand{\diago}{%
 	\setlength{\unitlength}{1cm}
 	\begin{picture}(0,0)(.35,-.25)
 	\put(0,0){\line(3,-2){.6}}
 	\put(0,-0.35){\footnotesize D}
 	\put(0.35,-0.2){\footnotesize N}
 	\end{picture}}

 \newcommand{\mc}[1]{\multicolumn{1}{c}{#1}}
 \newcommand{\mytab}{
 	 \begin{tabular}{cc|c|c|c|c|c|c|c|} 
 		&\mc{$x$} &\mc{2} &\mc{3} &\mc{4} &\mc{6} &\mc{8} &\mc{14} &\mc{20} \\
 		$y$&\diago &  1    &  2    &  3    &  4    &     5 &  6     &     7  \\ \cline{2-9}
 		2    &   1  &  F    &  C    &  C    &  C    &     C &  C     &     C  \\ \cline{2-9}
 		5    &   2  &  F    &  F    &  F    &    N   &  N     &  N      &   N     \\ \cline{2-9}
 		10   &   3  &  F    &  F    &  F    &   {\color{red}F}    &       &        &        \\ \cline{2-9}
 		27   &   4  &  F    &  F    &  F    &   F    &       &        &        \\ \cline{2-9}
 	\end{tabular}
 }

\begin{table}[h!] 
	\centering
	\mytab
	\captionsetup{justification=centering}
	\caption{$\IMPS(D,2,N)$ \\
		F: fills the ambient space,
		C: closed, but does not fill,
		N: not closed \\
		$x$: dimension of the ambient space,
		$y$: expected dimension } \label{thetable}
\end{table}

Above questions are often inspired by applications. However, they may be also seen as analogues of well-studied questions in the theory of secant varieties, rank and border rank. For example, the closedness of the image of a map is a natural question from a theoretical point of view, but also plays an important role in best approximation problems \cite{de2008tensor, qi2017complex}.

\subsubsection*{Plan of the article.}
Let us describe the content and the structure of this paper.
For more detailed
explanations we refer to the introductions of the individual sections.
Our article %is dedicated to study uniform matrix product states and 
contains an introduction, two main sections 
and two appendices. In Section \ref{section:def}  we collect basic definitions, results and notations regarding
uniform matrix product states $\IMPS(D,d,N)$. We discuss the closedness in trivial cases, dimension, local symmetries 
and we define the so-called $W$-state.  

The next Section \ref{section:mainresults} consists of four parts.
In Section \ref{section:topprop} we give a complete classification when $\IMPS(2,d,N)$ is closed. Then we discuss the closedness in other cases.
We also introduce the \textit{injectivity radius} $C_D$ and we prove connectedness of the $\IMPS$.
In Section~\ref{section:surjectivity} we explore for which parameters the set $\IMPS(D,d,N)$ fills the ambient space $\Cyc^N(\mathbb{C}^d)$.
In Section \ref{section:tracepar} we recall another parametrization of the matrix product states.
Using this \textit{trace parametrization} and \textit{Macaulay2} \cite{M2} we obtain defining equations
for $\IMPS(2,2,N)$ for small values of $N$.
The last Section \ref{section:fundtheorem} is devoted to the new results related to the fundamental theorem of matrix 
product states. At the end of our article we have placed two appendices.
In Appendix~A we collect the technical part of the proof of the Theorem~\ref{ClosedComplex}.
In Appendix~B we give a full
description of $\IMPS(2,2,4)$ as a constructible subset of $\Cyc^4(\CC^2)$. 

\section{Definitions and basic properties} \label{section:def}

In this section we introduce basic notions and theorems.
We work over $\mathbb{C}$, unless explicitly stated otherwise.
We start by recalling the definition of a uniform matrix product state:
\begin{defn} \label{def:uMPS}
	We fix 3 parameters $D,d,N \in \mathbb{N} \setminus \{0\}$.
	A tensor $T \in (\mathbb{C}^d)^{\otimes N}$ is called a \emph{uniform $D$-matrix product state} if there is a collection of $d$ matrices $M_0,\ldots,M_{d-1}$ in $\mathbb{C}^{D \times D}$ such that  
	\begin{equation} \label{eq:uMPS}
	T = T_N(M_0,\ldots,M_{d-1}) := \sum_{0 \leq i_1, \ldots, i_N \leq d-1}{\tr(M_{i_1} \cdots M_{i_N})e_{i_1}\otimes \cdots \otimes e_{i_N}} \text{.}
	\end{equation}
	We will sometimes write $T_{D,d,N}$ instead of $T_N$ if the parameters are not clear from the context.
The set of all uniform $D$-matrix product states in $(\mathbb{C}^d)^{\otimes N}$ is denoted by $\IMPS(D,d,N)$.
%The motivation for considering matrix product states comes from quantum information theory. They are a special case of \emph{tensor network states}, see \cite{PEPSopen} for an introduction.
In other words, $\IMPS(D,d,N)$ is the image of the polynomial map
\begin{align*}
T_N:(\mathbb{C}^{D \times D})^d \to (\mathbb{C}^d)^{\otimes N} \text{.}
\end{align*}
\end{defn}
From now on we will leave the parameter $D$ implicit and simply use the terminology \emph{uniform matrix product state}. We point out that every cyclically symmetric tensor will be a matrix product state for $D$ large enough (Corollary \ref{fillsforNlarge}).
\begin{remark}
	The motivation for considering matrix product states comes from quantum information theory, where they describe physically meaningful states (more precisely: ground states of local hamiltonians) of physical systems that consist of $N$ sites placed in a ring, where each site has a state space of \emph{physical dimension} $d$. The parameter $D$, called \emph{bond dimension}, indicates how strong the entanglement between neighboring sites can be. We refer the reader to \cite[Section 1]{PEPSopen} for an overview, where matrix product states are introduced as a special class of so-called \emph{tensor network states}.
	
	We briefly explain the adjective ``uniform": one can define matrix product states (MPS) more generally by allowing the matrices $M_i$ to depend on the physical site. An MPS is then a tensor of the form 
	\begin{equation} \label{eq:nonuniform}
	\sum{\tr(M^1_{i_1} \cdots M^N_{i_N})e_{i_1}\otimes \cdots \otimes e_{i_N}},
	\end{equation}
	where we now have $N$ matrix tuples $(M^j_0, \ldots, M^j_{d-1})$. In this setup one can allow the size of the matrices, as well of the number $d$ of matrices in a tuple, to depend on the tuple (as long as the products appearing in (\ref{eq:nonuniform}) make sense). An important special case is that of \emph{open boundary conditions}, where the matrices $M^1_i$ (resp.\  $M^N_i$) are row vectors (resp.\ column vectors) and hence the traces in (\ref{eq:nonuniform}) are traces of $1 \times 1$-matrices. Tensors of this form also arise in numerical optimization, under the name \emph{tensor train format} \cite{oseledets2011tensor}.
	
	Uniform matrix product states are simply matrix product state where the tuple $(M^j_0, \ldots, M^j_{d-1})$ is the same for every $j$. They are also known as \emph{translation invariant MPS} or \emph{site-independent MPS}. %Matrices are to MPS, as symmetric matrices are to $\IMPS$.
	 %They are a special case of \emph{tensor network states}, see \cite{PEPSopen} for an introduction.
\end{remark}
The set $\IMPS(D,d,N)$ is a cone, i.e.~if  $T\in\IMPS(D,d,N)$, then also $\lambda T \in \IMPS(D,d,N)$ for every $\lambda \in \mathbb{C}$. This is no longer true if the field is not algebraically closed, e.g.\ for $\mathbb{R}$ it is only guaranteed if $N$ is odd or $\lambda \geq 0$.

Since $\tr(M_{i_1} \cdots M_{i_N})$ does not change if we cyclically permute the matrices in the product, it follows that $\IMPS(D,d,N) \subseteq \Cyc^N(\mathbb{C}^d)$, where  $\Cyc^N(\mathbb{C}^d) \subseteq (\mathbb{C}^d)^{\otimes N}$ is the subspace of cyclically symmetric tensors.

As $\IMPS(D,d,N)$ is the image of a polynomial map, it is a constructible set (i.e.\ a finite union of locally Zariski closed sets) by Chevalley's theorem \cite[IV, 1.8.4.]{EGA}. Its Euclidean closure $\overline{\IMPS(D,d,N)}$ agrees with its Zariski closure and is an algebraic variety. In Section \ref{section:tracepar} we will give defining equations for small parameter values, and in Appendix \ref{appendix:224} we give a complete description of the smallest nontrivial case $\IMPS(2,2,4)$.

A natural question to ask is the following:
\begin{question}\label{quest:closed}
	For which parameters $D,d,N$ is $\IMPS(D,d,N)$ a closed set?
\end{question}
Analogous questions have been investigated from the point of view of complex and real tensors of bounded rank. In that case, most often, the locus is not closed leading to the central notion of border rank \cite{landsberg2012tensors, landsberg2017geometry, de2008tensor, qi2017complex, seigal2017real, buczynska2014secant}. 

Question \ref{quest:closed} will be the main subject of Section \ref{section:topprop}.
%{\color{red} Compare with what is known for sow rank (symmetric) tensors? What is actually known?}
Below we collect some easy results regarding closedness of $\IMPS(D,d,N)$. The next lemma follows immediately from the definitions.
\begin{lemma}
If $D \leq D'$, then $\IMPS(D,d,N) \subseteq \IMPS(D',d,N)$.
\end{lemma}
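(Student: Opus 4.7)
The plan is to show that any tensor expressible with $D \times D$ matrices can also be expressed with $D' \times D'$ matrices by a block embedding that preserves all the trace values appearing in the definition of $T_N$.

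Concretely, suppose $T \in \IMPS(D,d,N)$, so $T = T_N(M_0, \ldots, M_{d-1})$ for some tuple $(M_0, \ldots, M_{d-1}) \in (\mathbb{C}^{D \times D})^d$. I would define, for each $i$, the $D' \times D'$ matrix
\[
M'_i = \begin{pmatrix} M_i & 0 \\ 0 & 0 \end{pmatrix},
\]
where the lower-right block is the zero block of size $(D'-D) \times (D'-D)$. By block multiplication, for any word $i_1, \ldots, i_N$,
\[
M'_{i_1} \cdots M'_{i_N} = \begin{pmatrix} M_{i_1} \cdots M_{i_N} & 0 \\ 0 & 0 \end{pmatrix},
\]
so that $\tr(M'_{i_1} \cdots M'_{i_N}) = \tr(M_{i_1} \cdots M_{i_N})$. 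Substituting this identity into the definition \eqref{eq:uMPS} yields $T_N(M'_0, \ldots, M'_{d-1}) = T_N(M_0, \ldots, M_{d-1}) = T$, hence $T \in \IMPS(D',d,N)$.

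There is essentially no obstacle here; the only thing to verify is the block multiplication identity, which is immediate. The statement and its proof simply record the fact that enlarging the bond dimension can only enlarge the family of representable tensors.
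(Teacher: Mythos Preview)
Your proof is correct and is exactly the argument the paper has in mind; the paper simply states that the lemma ``follows immediately from the definitions'' without spelling out the block-embedding construction you give. The same padding-by-zero-blocks idea reappears later in the paper in the proof of Lemma~\ref{joinlemma}.
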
 
\begin{cor} \label{inductD}
	If $\IMPS(D,d,N) = \Cyc^N(\mathbb{C}^d)$, then for any $D'\geq D$, \linebreak $\IMPS(D',d,N) = \Cyc^N(\mathbb{C}^d)$. In particular, in such a case 
	$\IMPS(D',d,N)$ is closed for all $D'\geq D$.
\end{cor}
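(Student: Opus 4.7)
The plan is to chain together two inclusions: the one from the preceding lemma, and the universal containment $\IMPS(D',d,N) \subseteq \Cyc^N(\mathbb{C}^d)$ that was already observed earlier in the section (following from cyclic invariance of the trace).

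Concretely, I would argue as follows. Fix $D' \geq D$. By the preceding lemma we have
\[
\IMPS(D,d,N) \subseteq \IMPS(D',d,N),
\]
and on the other hand, by the cyclic symmetry of traces noted after Definition \ref{def:uMPS},
\[
\IMPS(D',d,N) \subseteq \Cyc^N(\mathbb{C}^d).
\]
Substituting the hypothesis $\IMPS(D,d,N) = \Cyc^N(\mathbb{C}^d)$ into the first display gives $\Cyc^N(\mathbb{C}^d) \subseteq \IMPS(D',d,N)$, and combining with the second display yields the equality $\IMPS(D',d,N) = \Cyc^N(\mathbb{C}^d)$.

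For the ``in particular'' clause, it suffices to observe that $\Cyc^N(\mathbb{C}^d)$ is a linear subspace of the finite-dimensional vector space $(\mathbb{C}^d)^{\otimes N}$, hence closed in both the Euclidean and Zariski topologies. Therefore $\IMPS(D',d,N)$ is closed for every $D' \geq D$.

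There is no real obstacle here: the statement is a formal consequence of monotonicity in $D$ combined with the already-recorded observation that every uMPS is cyclically symmetric. The only thing to be careful about is to invoke both inclusions, rather than just the lemma, so that the reverse containment $\Cyc^N(\mathbb{C}^d) \subseteq \IMPS(D',d,N)$ is actually justified.
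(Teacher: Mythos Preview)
Your argument is correct and is exactly the immediate deduction the paper has in mind: the corollary is stated without proof right after the monotonicity lemma, so the intended reasoning is precisely the chain $\Cyc^N(\mathbb{C}^d)=\IMPS(D,d,N)\subseteq\IMPS(D',d,N)\subseteq\Cyc^N(\mathbb{C}^d)$ that you wrote out, together with the observation that a linear subspace is closed.
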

\begin{lemma} \label{lemma:proj}
For $d \leq d'$ we have an inclusion $(\mathbb{C}^d)^{\otimes N}\subset (\mathbb{C}^{d'})^{\otimes N}$. The following equality holds: 
$$\IMPS(D,d,N) = \IMPS(D,d',N) \cap (\mathbb{C}^d)^{\otimes N}.$$ %\text{.}
\end{lemma}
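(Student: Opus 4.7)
The statement splits into two inclusions, and both turn out to be straightforward consequences of unpacking the definition in the standard basis.

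For the inclusion $\IMPS(D,d,N) \subseteq \IMPS(D,d',N)\cap(\mathbb{C}^d)^{\otimes N}$, I would start with a tensor $T = T_N(M_0,\dots,M_{d-1})$ and enlarge the tuple by setting $M_d = M_{d+1} = \cdots = M_{d'-1} = 0$. Then in the formula (\ref{eq:uMPS}) any multi-index $(i_1,\dots,i_N)$ in which some $i_k \geq d$ contains a zero factor in the product, so its trace coefficient vanishes; the remaining terms reproduce exactly $T$ regarded as an element of $(\mathbb{C}^{d'})^{\otimes N}$. This places $T$ in $\IMPS(D,d',N)$ while still lying in $(\mathbb{C}^d)^{\otimes N}$.

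For the reverse inclusion $\IMPS(D,d',N)\cap(\mathbb{C}^d)^{\otimes N}\subseteq \IMPS(D,d,N)$, suppose $T = T_N(M_0,\dots,M_{d'-1})$ also lies in $(\mathbb{C}^d)^{\otimes N}$. Write $T$ in the standard basis of $(\mathbb{C}^{d'})^{\otimes N}$: by hypothesis, every coefficient whose multi-index $(i_1,\dots,i_N)$ involves some $i_k \in \{d,\dots,d'-1\}$ must be zero, i.e.\ $\tr(M_{i_1}\cdots M_{i_N}) = 0$ for all such indices. Now simply take the truncated tuple $\tilde M_i := M_i$ for $i = 0,\dots,d-1$. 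By definition, $T_N(\tilde M_0,\dots,\tilde M_{d-1})$ equals the sum of the surviving terms of $T_N(M_0,\dots,M_{d'-1})$, which by the vanishing above is exactly $T$.

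There is no real obstacle here; the whole argument relies on the observation that in the tensor parametrization (\ref{eq:uMPS}) the coefficient of a basis tensor $e_{i_1}\otimes\cdots\otimes e_{i_N}$ is purely a function of the matrices $M_{i_1},\dots,M_{i_N}$ indexed by the entries appearing in the multi-index, so that zero coefficients can be harmlessly dropped or added by padding with zero matrices. I would present the proof as two short paragraphs, one per inclusion, emphasizing that the map $T_N$ is simultaneously well-behaved under the natural inclusion $(\mathbb{C}^d)^{\otimes N}\hookrightarrow(\mathbb{C}^{d'})^{\otimes N}$ on the target and the inclusion $(\mathbb{C}^{D\times D})^d\hookrightarrow(\mathbb{C}^{D\times D})^{d'}$ (appending zeros) on the source.
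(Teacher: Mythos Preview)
Your proof is correct and essentially matches the paper's approach: the first inclusion is handled identically by padding with zero matrices, and for the reverse inclusion the paper phrases your coefficient argument as ``the projection $(\mathbb{C}^{d'})^{\otimes N}\to(\mathbb{C}^d)^{\otimes N}$ maps $\IMPS(D,d',N)$ to $\IMPS(D,d,N)$,'' which is exactly your truncation step expressed map-theoretically.
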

\begin{proof}
	The inclusion $\subseteq$ follows from the equality
	\[
	T_N(M_0,\ldots,M_{d-1}) = T_N(M_0,\ldots,M_{d-1},0,\ldots,0) \text{.}
	\]
	For the other inclusion we note that the projection $(\mathbb{C}^{d'})^{\otimes N} \to (\mathbb{C}^d)^{\otimes N}$  maps $\IMPS(D,d',N)$ to $\IMPS(D,d,N)$.
\end{proof}
\begin{cor} \label{inductd}
	If $\IMPS(D,d,N)$ is not closed, then for all $d' \geq d$, the set $\IMPS(D,d',N)$ is not closed.
\end{cor}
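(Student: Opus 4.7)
The plan is to prove the contrapositive: fix any $d' \geq d$, assume $\IMPS(D,d',N)$ is closed, and deduce that $\IMPS(D,d,N)$ is closed as well. The key tool that makes this immediate is Lemma \ref{lemma:proj}, which identifies
\[
\IMPS(D,d,N) = \IMPS(D,d',N) \cap (\mathbb{C}^d)^{\otimes N},
\]
where $(\mathbb{C}^d)^{\otimes N}$ sits inside $(\mathbb{C}^{d'})^{\otimes N}$ as the linear subspace spanned by the basis tensors $e_{i_1} \otimes \cdots \otimes e_{i_N}$ with $0 \leq i_j \leq d-1$.

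Since this subspace is linear, it is closed in both the Euclidean and Zariski topologies on $(\mathbb{C}^{d'})^{\otimes N}$. Hence, under the assumption that $\IMPS(D,d',N)$ is closed, the right-hand side is an intersection of two closed sets, and so $\IMPS(D,d,N)$ is closed. The contrapositive is exactly the statement of the corollary, and since the argument works for every individual $d' \geq d$, the ``for all'' quantifier in the conclusion is handled automatically.

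There is no real obstacle here: the corollary is essentially a topological repackaging of Lemma \ref{lemma:proj}, combined with the elementary fact that linear subspaces are closed. The only thing worth flagging is that Lemma \ref{lemma:proj} uses the coordinate projection $(\mathbb{C}^{d'})^{\otimes N} \to (\mathbb{C}^d)^{\otimes N}$, which maps $\IMPS(D,d',N)$ onto $\IMPS(D,d,N)$; this already encodes the compatibility of the uMPS parametrizations under enlarging the physical dimension, so no further work is needed.
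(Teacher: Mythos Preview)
Your proof is correct and is exactly the intended argument: the paper does not spell out a proof of this corollary, but it is placed immediately after Lemma~\ref{lemma:proj} precisely because it follows from that lemma by the contrapositive you give, using that the linear subspace $(\mathbb{C}^d)^{\otimes N}\subset(\mathbb{C}^{d'})^{\otimes N}$ is closed.
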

\begin{prop}\label{prop:TrivialCases}%[Trivial cases]
	If $D=1$ or $d=1$ or $N\leq2$, then $\IMPS(D,d,N)$ is closed.
\end{prop}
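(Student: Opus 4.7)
The plan is to identify the image of $T_N$ explicitly in each of the three sub-cases.

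The cases $D=1$ and $d=1$ I would handle by direct computation. When $D=1$, the matrices collapse to scalars and one verifies that $T_N(m_0,\ldots,m_{d-1})=\left(\sum_i m_i e_i\right)^{\otimes N}$, so $\IMPS(1,d,N)$ equals the affine cone over the $N$-th Veronese embedding of $\mathbb{P}^{d-1}$, which is Zariski closed. When $d=1$ the image reduces to $\{\tr(M_0^N)\,e_0^{\otimes N}:M_0\in\mathbb{C}^{D\times D}\}$; since $\tr(M_0^N)$ ranges over all of $\mathbb{C}$ (take $M_0$ to be a diagonal matrix with first entry a chosen $N$-th root), the image is the closed line $\mathbb{C}\cdot e_0^{\otimes N}$. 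The $N=1$ case is similar: the map $(M_0,\ldots,M_{d-1})\mapsto\sum_i\tr(M_i)\,e_i$ is clearly surjective onto $\mathbb{C}^d=\Cyc^1(\mathbb{C}^d)$.

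The interesting case is $N=2$. Here I would exploit the fact that the trace pairing $B(X,Y)=\tr(XY)$ is a symmetric, non-degenerate bilinear form on $\mathbb{C}^{D\times D}$. Because the base field is algebraically closed, there is a basis $E_1,\ldots,E_{D^2}$ of $\mathbb{C}^{D\times D}$ satisfying $\tr(E_kE_l)=\delta_{kl}$. Writing $M_i=\sum_k c_{ik}E_k$ and letting $C$ be the $d\times D^2$ matrix with entries $c_{ik}$, the tensor $T_2(M_0,\ldots,M_{d-1})$, viewed as a matrix in $\Sym^2(\mathbb{C}^d)$, is exactly the Gram matrix $CC^{T}$. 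Conversely, any symmetric matrix of rank $\leq D^2$ admits such a factorization (diagonalize the associated quadratic form over $\mathbb{C}$ and take square roots). Consequently
\[
\IMPS(D,d,2)=\bigl\{A\in\Sym^2(\mathbb{C}^d):\rk(A)\leq D^2\bigr\},
\]
the classical determinantal variety cut out by the vanishing of the $(D^2+1)\times(D^2+1)$ minors, hence Zariski closed.

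The main obstacle is the $N=2$ case: recognizing the image as a determinantal variety relies on orthonormalizing the trace form, a step that uses the algebraic closedness of $\mathbb{C}$ in an essential way. Over $\mathbb{R}$ this part of the argument would need to be revised since the trace form on $\mathbb{R}^{D\times D}$ is indefinite for $D\geq 2$.
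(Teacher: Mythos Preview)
Your proof is correct and follows essentially the same route as the paper: the Veronese cone for $D=1$, the entire ambient space for $d=1$ or $N=1$, and the identification of $\IMPS(D,d,2)$ with the symmetric $d\times d$ matrices of rank at most $D^2$ for $N=2$. The only difference is that you spell out the Gram-matrix/orthonormalization argument behind the $N=2$ identification, which the paper merely asserts.
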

\begin{proof}
	If $d=1$ or $N=1$, then by definition $\IMPS(D,d,N)=(\mathbb{C}^d)^{\otimes N}$. \\
	For $D=1$: $\IMPS(1,d,N)$ is equal to 
	the image of the Veronese embedding $\mathbb{C}^d \to \Sym^N(\mathbb{C}^d)\subset (\mathbb{C}^d)^{\otimes N}$, which is known to be closed. \\
	For $N=2$: $\IMPS(D,d,2) \subseteq (\mathbb{C}^d)^{\otimes 2} = \CC^{d \times d}$ consists of
	all symmetric $d \times d$-matrices of rank at most $D^2$,
	 	 which is a closed set.
\end{proof}

If $T\in\IMPS(D,d,N)$, there can be many different choices of matrices $M_0,\ldots,M_{d-1}$ exhibiting $T$ as $T_N(M_0,\ldots,M_{d-1})$. 
%We will investigate this further in Proposition \ref{expdim}. {\color{red} What is this sentence doing here?}
In particular, we have the following:
\begin{remark} \label{JNF}
Observe that for $P \in GL(D,\mathbb{C})$, it holds that 
	\[
	T_N(M_0,\ldots,M_{d-1}) = T_N(P^{-1}M_0P,\ldots,P^{-1}M_{d-1}P) \text{.}
	\]
In particular, %if $\mathbb{K}=\mathbb{C}$, 
for $T \in \IMPS(D,d,N)$, we can write $T = T_N(M_0,\ldots,M_{d-1})$ where $M_0$ is in Jordan normal form.
\end{remark}
We expect that the generic fiber of the $\IMPS$-map $(\mathbb{C}^{D \times D})^d \to \Cyc^N(\mathbb{C}^d)$ consists of $D!N$ simultaneous conjugacy classes: the $D!$ comes from permuting the rows and columns of the matrices, and the $N$ from multiplying each matrix with an $N$-th root of unity. In Section \ref{section:fundtheorem}, we will show that this is true for large $N$. It is a corollary of the \emph{fundamental theorem of uniform matrix product states}  \cite{CiracVerstraeteFundThm,Molnar}.
%This is related to Theorem \ref{thm:85678} below. 

%The \emph{fundamental theorem of uniform matrix product states}  \cite{CiracVerstraeteFundThm,Molnar} states that if $N$ is large enough, then generically the matrices $(M_0, \ldots, M_{d-1})$ are unique up to the $GL_D$ action defined above and multiplication by a root of unity. See Section \ref{section:fundtheorem} for a precise statement, as well as the proof of a slight strengthening.

We now discuss the expected dimension of the $\IMPS$. 
\begin{prop} \label{expdim}
	The dimension of the variety $\overline{\IMPS(D,d,N)}$ is at most \linebreak $\min \{(d-1)D^2 + 1 , \dim(\Cyc^N(\mathbb{C}^d))\}$. 
\end{prop}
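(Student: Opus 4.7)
The bound $\dim \overline{\IMPS(D,d,N)} \leq \dim \Cyc^N(\mathbb{C}^d)$ is immediate from the inclusion $\IMPS(D,d,N) \subseteq \Cyc^N(\mathbb{C}^d)$ noted earlier in the section, so the content of the statement is the bound $(d-1)D^2 + 1$. The plan is to apply the fiber dimension theorem to the polynomial map
\[
T_N : (\mathbb{C}^{D\times D})^d \longrightarrow \Cyc^N(\mathbb{C}^d),
\]
whose source has dimension $dD^2$. It suffices to exhibit a positive-dimensional family of preimages of a generic point, and then $\dim \overline{\IMPS(D,d,N)} \leq dD^2 - (\text{generic fiber dimension})$.

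The key input is the $GL(D,\mathbb{C})$-action by simultaneous conjugation recorded in Remark \ref{JNF}: for every $P \in GL(D,\mathbb{C})$,
\[
T_N(M_0,\ldots,M_{d-1}) = T_N(P^{-1}M_0 P, \ldots, P^{-1}M_{d-1} P).
\]
Hence $T_N$ is constant on the orbits of the diagonal $GL(D,\mathbb{C})$-action on $(\mathbb{C}^{D\times D})^d$. The scalar matrices $\lambda I \in GL(D,\mathbb{C})$ act trivially, so this action factors through $PGL(D,\mathbb{C})$, which has dimension $D^2 - 1$. For a generic tuple $(M_0,\ldots,M_{d-1})$ the $PGL(D,\mathbb{C})$-stabilizer is finite (one can take, for instance, $M_0$ to be a diagonal matrix with distinct eigenvalues and $M_1$ with no zero entries; any $P$ commuting with $M_0$ up to scalar must be diagonal, and then the condition on $M_1$ forces $P$ to be scalar). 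Therefore the generic orbit, which is contained in the fiber of $T_N$ through $(M_0,\ldots,M_{d-1})$, has dimension exactly $D^2 - 1$.

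Combining this with the fiber dimension theorem yields
\[
\dim \overline{\IMPS(D,d,N)} \;\leq\; \dim (\mathbb{C}^{D\times D})^d - (D^2 - 1) \;=\; (d-1)D^2 + 1,
\]
which together with the obvious bound $\dim \overline{\IMPS(D,d,N)} \leq \dim \Cyc^N(\mathbb{C}^d)$ gives the claim. The only nontrivial step is verifying that the generic $PGL(D,\mathbb{C})$-stabilizer is finite; this is a standard fact about simultaneous conjugation of $d \geq 2$ matrices and can be checked by the explicit choice indicated above (the case $d=1$ is already subsumed by Proposition \ref{prop:TrivialCases} since $\IMPS(D,1,N)$ is a single line).
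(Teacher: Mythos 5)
Your proof is correct. It rests on the same underlying idea as the paper's proof, namely that the conjugation symmetry of Remark \ref{JNF} cuts the dimension down by $D^2-1$, but the two arguments package this differently. The paper constructs an explicit slice: for generic input it normalizes $M_0$ to be diagonal (using up the $D^2-D$ off-torus directions of $PGL_D$) and then uses the residual diagonal torus to set the $D-1$ off-diagonal top-row entries of $M_1$ equal to $1$, so that the image is already dominated by a source of dimension $(d-1)D^2+1$. You instead keep the full source $(\mathbb{C}^{D\times D})^d$ and bound the generic fiber from below by the dimension of a $PGL(D,\mathbb{C})$-orbit, checking that the generic stabilizer is finite, and then invoke the fiber dimension theorem. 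Your route is slightly more robust (it does not require exhibiting a normal form, only a tuple with trivial centralizer modulo scalars), while the paper's slice argument is more elementary and also hands you explicit coordinates on a dominating family. One small imprecision: the stabilizer of a tuple in $PGL(D,\mathbb{C})$ consists of classes $[P]$ with $P^{-1}M_iP=M_i$ exactly (conjugation is insensitive to the scalar), not ``commuting up to scalar''; this only shrinks the set you need to control, so your conclusion that the generic stabilizer is trivial, and hence the generic orbit has dimension $D^2-1$, stands.
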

\begin{proof}
	If $T \in \IMPS(D,d,N)$, we can write $T=T_N(M_0,\ldots,M_{d-1})$. By Remark~\ref{JNF} we can, for $M_0$ and $M_1$ generic, assume that $M_0$ is diagonal and that the $D-1$ nondiagonal entries on the top row of $M_1$ are all equal to $1$. Since the dimension of the image of a polynomial map does not change if we restrict to a dense subset, we are done.%This leaves us with exactly $(d-1)D^2 + 1$ degrees of freedom. {\color{red} Informal sounding argument, in contrast with the style of section 3...}
\end{proof}

\begin{remark}
	It is not hard to show that the dimension of the ambient space $\Cyc^N(\mathbb{C}^d)$ is equal to 
%	\[
%	m = \sum_{\ell|N}{\frac{d^\ell}{\ell}\prod_{\substack{p|\frac{N}{\ell} \\ p \text{ prime}}}{(1-\frac{1}{p})}} \text{.}
%	\]
\[
	\dim \Cyc^N(\mathbb{C}^d) = \frac{1}{N}\sum_{\ell|N}{\varphi(\ell)d^{N/\ell}}
\]
where $\varphi$ is the Euler totient function.

	If $N$ is prime, this simplifies to
	\[
	\dim \Cyc^N(\mathbb{C}^d) = \frac{d^N+(N-1)d}{N} \text{.}
	\]
\end{remark}

%\begin{remark}
%	We expect that the generic fiber of the $\IMPS$-map $\mathbb{C}^d \otimes \mathbb{C}^{D \times D} \to \Cyc^N(\mathbb{C}^d)$ consists of $D!N$ similtaneous conjugacy classes: the $D!$ comes from permuting the rows and columns of the matrices, and the $N$ from multiplying each matrix with an $N$-th root of unity. This is related to Theorem \ref{thm:85678} below. 
%\end{remark}

\begin{defn}
	If the equality
	\[
	\dim \IMPS(D,d,N) = \min \{(d-1)D^2 + 1 , \dim(\Cyc^N(\mathbb{C}^d))\}
	\]
	holds, we say that $\IMPS(D,d,N)$ has \emph{expected dimension}.
\end{defn}
It is natural to ask for which parameters $\IMPS(D,d,N)$ has expected dimension. This question is very similar to the the problem of determining for which parameters the set of tensors of bounded tensor rank has expected dimension, which has received considerable attention in the literature. The most famous result along these lines is the celebrated Alexander-Hirschowitz Theorem \cite{alexanderHirschowitz}, which gives a complete answer in the case of symmetric tensor rank. For general tensors the question remains open --- for important partial results we refer to \cite{catalisano2011secant, landsberg2012tensors, oeding2016equations, abo2012new}. 
\begin{remark} \label{rmk:dimJac}
	The dimension of $\overline{\IMPS(D,d,N)}$ is easy to compute for small values of $(D,d,N)$ using the Jacobian criterion. In particular, we could check that for all cases in Table \ref{thetable}, $\IMPS(D,d,N)$ has expected dimension.
\end{remark}

\begin{conjecture} \label{conj:exp}
For every choice of $(D,d,N)$, $\IMPS(D,d,N)$ has expected dimension.
\end{conjecture}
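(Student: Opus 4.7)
The plan is to split the conjecture into two regimes determined by the minimum in the definition of expected dimension, and to combine the fiber dimension theorem in one with a surjectivity criterion in the other.

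In the parameter-limited regime $(d-1)D^2+1 \leq \dim \Cyc^N(\mathbb{C}^d)$, Proposition~\ref{expdim} already supplies the upper bound. For the matching lower bound I would analyze the fibers of the parametrization $T_N : (\mathbb{C}^{D\times D})^d \to \Cyc^N(\mathbb{C}^d)$. Every fiber contains a simultaneous conjugation orbit of $\mathrm{PGL}(D)$ by Remark~\ref{JNF}, so it has dimension at least $D^2-1$; if the generic fiber has dimension exactly $D^2-1$, the fiber dimension theorem yields $\dim \overline{\IMPS(D,d,N)} = dD^2 - (D^2-1) = (d-1)D^2+1$. To rule out larger fibers I would invoke the fundamental theorem of uMPS of Section~\ref{section:fundtheorem}: for $N$ sufficiently large relative to $D$, a generic tuple $(M_0,\ldots,M_{d-1})$ is recovered from $T_N(M_0,\ldots,M_{d-1})$ up to simultaneous conjugation and finitely many discrete symmetries (a permutation together with an $N$-th root of unity on each matrix), and a generic tuple has finite conjugation stabilizer in $\mathrm{PGL}(D)$ as soon as $d \geq 2$, so the orbit attains the full dimension $D^2-1$.

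In the ambient-limited regime $(d-1)D^2+1 > \dim \Cyc^N(\mathbb{C}^d)$, expected dimension amounts to $\overline{\IMPS(D,d,N)} = \Cyc^N(\mathbb{C}^d)$. Here I would use the surjectivity criterion of Section~\ref{section:surjectivity}: it suffices to find a single tuple at which the differential of $T_N$ surjects onto $\Cyc^N(\mathbb{C}^d)$. The normalized slice from the proof of Proposition~\ref{expdim} (generic diagonal $M_0$, and $M_1$ with normalized top row) is a natural first candidate, and the trivial cases of Proposition~\ref{prop:TrivialCases} give a sanity check.

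The main obstacle will be the small-$N$ regime, where $N$ lies below the injectivity radius $C_D$ of Section~\ref{section:topprop}. There the fundamental theorem does not apply, so the fiber-dimension argument breaks down, and the ambient dimension is typically too large for the surjectivity criterion to close the gap. My fallback would be a direct Jacobian computation: $dT_N$ at a chosen tuple decomposes into trace derivatives $\tr(M_{i_1}\cdots M_{i_{k-1}} E_{ab} M_{i_{k+1}}\cdots M_{i_N})$, and one must rule out accidental linear relations among these. Recasting this as a Hilbert function question for the invariant ring $\mathbb{C}[(\mathbb{C}^{D\times D})^d]^{\mathrm{PGL}(D)}$ via the trace algebra of Section~\ref{section:tracepar}, and exploiting its generation by traces of words of length at most some known Procesi--Razmyslov bound, seems the most promising route; but producing a rank estimate uniform in $(D,d,N)$, rather than one which degrades at the injectivity-radius threshold, is precisely what keeps the full conjecture open.
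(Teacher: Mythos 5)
The statement you are addressing is labelled a \emph{conjecture} in the paper, and the paper offers no proof of it: the authors only record that it holds in the special cases where either the closure fills $\Cyc^N(\CC^d)$ (large $D$, via Corollary~\ref{fillsforNlarge}), or the general fiber of $T_{D,d,N}$ has dimension $D^2-1$ (large $N$, via the fundamental theorem and Corollary~\ref{cor:fundthm}), together with the small cases of Table~\ref{thetable} checked by the Jacobian criterion (Remark~\ref{rmk:dimJac}). Your two regimes reproduce exactly these two families of verified cases, and you are right, and commendably explicit, that the argument does not close: the uncovered range is where $(d-1)D^2+1\le\dim\Cyc^N(\CC^d)$ but $N<2GC_{D,d}+1$, so the fundamental theorem gives no control on the generic fiber and the ambient space is too large to fill. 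Your proposal is therefore not a proof, but it is a faithful account of the state of the art in the paper, and the obstacle you name is the genuine one.

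Two smaller points. First, the relevant threshold for the fiber argument is the \emph{generic injectivity index} $GC_{D,d}$ of Remark~\ref{rmk:geninin} (one needs $N\ge 2GC_{D,d}+1$), not the injectivity radius $C_D$; the latter governs when the $W$-state fails to be a uMPS, not when the fundamental theorem applies generically. Second, in the ambient-limited regime the ``surjectivity criterion'' of Section~\ref{section:surjectivity} (Theorem~\ref{baselocandsurj}) is about the image being \emph{closed}; for expected dimension you only need the closure to fill, which is the Jacobian criterion at a single point. Also, in the fundamental theorem the root of unity is a single $\zeta$ multiplying all matrices simultaneously, not an independent root of unity on each matrix. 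None of these affects your (correct) conclusion that the conjecture remains open.
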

We can obtain more cases for which Conjecture \ref{conj:exp} holds from our results in Section \ref{section:mainresults}. More precisely: $\IMPS(D,d,N)$ clearly has expected dimension if its closure fills the space $\Cyc^N(\CC^d)$. In Corollary \ref{fillsforNlarge}, this is shown to hold for large $D$. Additionally, $\IMPS(D,d,N)$ has expected dimension if the general fiber of the map $T_{D,d,N}$ has dimension $D^2-1$. This holds for large $N$ by the fundamental theorem (see Corollary \ref{cor:fundthm}).
%{\color{red} Some later results imply many cases where the conjecture hold.}

%{\color{red} General remark: maybe we should start counting from $0$ instead of from $1$, as this seems to be more customary in QIT, (cfr qubits)}
\begin{remark} \label{rmk:locsym}
	The set $\IMPS(D,d,N)$ is invariant under \emph{local symmetries}. Explicitely, note that the action of $GL_d$ on $\CC^d$ induces an action of $GL_d$ on $(\CC^d)^{\otimes N}$ by
	$A\cdot (v_1\otimes\ldots \otimes v_{N})=(A\cdot v_1)\otimes\ldots\otimes (A\cdot v_N)$.
	This action restricts to the space $\Cyc^N(\CC^d)$. It is compatible with the $\IMPS$ map in the sense
	that for a matrix $A=(a_{i,j})_{ij}\in  GL_d$ and $(M_0,\ldots, M_{d-1})\in(\CC^{D\times D})^d$, it holds
	that $$T_N(\sum_j a_{0,j}M_j,\ldots, \sum_j a_{{d-1},j}M_j)=A\cdot T_N(M_0,\ldots, M_{d-1}).$$
	This implies that $\IMPS(D,d,N)$ is invariant under the action defined above. 
\end{remark}

We conclude this section by introducing some notation. \newline Let us denote $[d]=\{0,\ldots,d-1\}$.
%{\color{red} Either change sentence, or move 2.16.}
Then for $(i_1,\ldots,i_N) \in [d]^N$, we define 
\[
e_{i_1i_2\cdots i_N} := \sum_{(j_1,\ldots,j_N)}{e_{j_1} \otimes \cdots \otimes e_{j_N}} \in \Cyc^N(\mathbb{C}^d) \subseteq (\mathbb{C}^d)^{\otimes N},
\]
where the sum is over all disctinct cyclic permutations of $(i_1,\ldots,i_N)$. For example: $e_{111}=e_1\ot e_1 \ot e_1$, $e_{110}=e_1\ot e_1 \ot e_0+e_1\ot e_0 \ot e_1+e_0\ot e_1 \ot e_1$, and $e_{1010}=e_1\ot e_0 \ot e_1 \ot e_0 + e_0 \ot e_1 \ot e_0 \ot e_1$.
%\[
%e_{i_1i_2\cdots i_N} := \sum_{g \in \mathbb{Z}/{N\mathbb{Z}}}{g \cdot (e_{i_1} \otimes \cdots \otimes e_{i_N})} \in \Cyc^N(\mathbb{C}^d) \subseteq (\mathbb{C}^d)^{\otimes N}
%\]
%where $\mathbb{Z}/{N\mathbb{Z}}$ acts on $(\mathbb{C}^d)^{\otimes N}$ by cyclically permuting the tensor factors. Note that the elements $e_{i_1i_2\cdots i_N}$ form a basis of $\Cyc^N(\mathbb{C}^d)$.\\

%{\color{red} Right now this is incompatible with the notation from appendix \ref{appendix:224}: e.g.\ in that appendix $e_{0000}=e_1 \ot e_1 \ot e_1 \ot e_1$, but according to the definition above $e_{0000}=4(e_1 \ot e_1 \ot e_1 \ot e_1)$. I would keep the one from appendix and change or delete the one above.}

%Below we define the so-called \emph{W-state}. 
\begin{defn}\label{Wstatedef}
The \emph{$W$-state} is defined as
\[
W_N := e_{0\cdots01} \in \Cyc^N(\mathbb{C}^2) \text{.}
\]
\end{defn}

The W-state plays, for example, an important role in Quantum Information Theory \cite{DVC2000}. It will also be essential in our discussion about closedness in the next section.
%It will play an important role in the next section.\\

\section{Main results}\label{section:mainresults}

In this section we present our main results.
\subsection{Topological properties}\label{section:topprop}
%{\color{red} I removed '' and connectedness", as connectedness is a topological property\\}
We start this section by giving a complete classification when $\IMPS(2,d,N)$ is closed.
\begin{theorem} \label{ClosedComplex}
	For  $d>1$ and $N>2$, $\IMPS(2,d,N)$ is closed if and only if $(d,N) = (2,3)$. 
\end{theorem}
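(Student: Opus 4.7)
I would split along the claimed classification: a positive direction at $(d,N)=(2,3)$, and a negative direction covering everything else. By Corollary~\ref{inductd} the negative direction reduces to two base cases, (i)~$\IMPS(2,2,N)$ for $N\geq 4$ and (ii)~$\IMPS(2,3,3)$; the rest follows by raising $d$.

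\emph{Positive case.} I would in fact prove the stronger statement $\IMPS(2,2,3)=\Cyc^3(\CC^2)$. Proposition~\ref{expdim} bounds $\dim\overline{\IMPS(2,2,3)}\leq\min\{5,4\}=4$, and a Jacobian check (Remark~\ref{rmk:dimJac}) makes this an equality, so $\overline{\IMPS(2,2,3)}=\Cyc^3(\CC^2)$. To upgrade closure to surjectivity I would parametrize directly: put $M_0$ in Jordan normal form via Remark~\ref{JNF}, and in each of the diagonal and Jordan-block branches solve the four resulting polynomial equations in the entries of $M_1$ by hand (this is essentially the surjectivity technique that the paper later systematizes in Section~\ref{section:surjectivity}).

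\emph{Step~(i): the $W$-state is a boundary point.} For membership in the closure, fix $\zeta$ with $\zeta^N=-1$ and set $M_0=\mathrm{diag}(1,\zeta)$, $M_1(\epsilon)=\mathrm{diag}(\epsilon,0)$. Every length-$N$ word $M_{i_1}\cdots M_{i_N}$ is diagonal, with trace $\epsilon^k$ when $k\geq 1$ of the indices equal $1$ and $1+\zeta^N=0$ when all are $0$, so $T_N(M_0,M_1(\epsilon))=\epsilon W_N+O(\epsilon^2)$; rescaling by $\epsilon^{-1}$ via the cone property gives $W_N\in\overline{\IMPS(2,2,N)}$. For non-membership when $N\geq 4$, suppose $W_N=T_N(M_0,M_1)$. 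The equation $\tr(M_0^{N-1}M_1)=1$ forbids a nilpotent $M_0$, so Cayley--Hamilton together with Remark~\ref{JNF} gives $M_0=\mathrm{diag}(\lambda,\mu)$ with $\lambda\neq\mu$ and $\lambda^N+\mu^N=0$. Write $M_1=\bigl(\begin{smallmatrix}a&b\\c&d\end{smallmatrix}\bigr)$. In the triangular subcase $bc=0$, every length-$N$ word is triangular with trace $\lambda^{N-j}a^j+\mu^{N-j}d^j$ (where $j$ is the number of $M_1$'s), so comparing $j=2$ and $j=3$ with $j=0$ forces $d/a=\mu/\lambda$, and then $\tr(M_0^{N-1}M_1)=(a/\lambda)(\lambda^N+\mu^N)=0\neq 1$; in the non-triangular subcase $bc\neq 0$, subtracting $\tr(M_0^{N-3}M_1M_0M_1)=0$ from $\tr(M_0^{N-2}M_1^2)=0$ yields $bc(\lambda-\mu)(\lambda^{N-3}-\mu^{N-3})=0$, hence $(\mu/\lambda)^{N-3}=1$; combined with $(\mu/\lambda)^N=-1$ this leaves only finitely many residue classes of $N$, and in each of them further three-$M_1$ length-$N$ trace conditions such as $\tr(M_0^{N-3}M_1^3)=0$ together with $\tr(M_0^{N-4}M_1M_0M_1^2)=0$ force $a=0$, again contradicting $\tr(M_0^{N-1}M_1)=1$. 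The detailed bookkeeping of this non-triangular subcase is where I expect the main technical difficulty to lie; it is the part the paper places in Appendix~A.

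\emph{Step~(ii): $\IMPS(2,3,3)$.} The $W$-state cannot serve as a witness here, because Lemma~\ref{lemma:proj} and the positive case give $W_3=e_{001}\in\Cyc^3(\CC^2)=\IMPS(2,2,3)\subseteq\IMPS(2,3,3)$. Instead, my plan is to exhibit an explicit boundary tensor by constructing a one-parameter family $(M_0(t),M_1(t),M_2(t))$ of $2\times 2$ matrices in which at least one matrix diverges in norm but all length-$3$ trace products stay bounded, and then to use the trace identities available for three $2\times 2$ matrices (the Procesi--Sibirski generators and syzygies underlying the trace algebra used later in Section~\ref{section:tracepar}) to verify that the limit tensor $T\in\Cyc^3(\CC^3)$ cannot be written as $T_3(M_0,M_1,M_2)$ for any honest triple. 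Producing the correct family and proving non-membership is the second main obstacle. Once (i) and (ii) are in place, Corollary~\ref{inductd} propagates non-closedness to every remaining pair $(d,N)$ with $d\geq 2$, $N\geq 3$, and $(d,N)\neq (2,3)$, completing the proof.
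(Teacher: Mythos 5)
Your architecture is the same as the paper's: reduce via Corollary~\ref{inductd} to the base cases $(2,2,3)$, $(2,3,3)$ and $(2,2,N)$ for $N\geq 4$, show the first fills $\Cyc^3(\CC^2)$, and use the $W$-state for the third. Your closure argument for $W_N$ (a valid variant of the paper's family), your elimination of a non-diagonalizable $M_0$, and your triangular subcase $bc=0$ are all correct, as is the key identity $bc(\lambda-\mu)(\lambda^{N-3}-\mu^{N-3})=0$. But two steps are genuinely missing. The larger one is $\IMPS(2,3,3)$: you supply neither the boundary tensor nor the non-membership proof. The paper takes $T=e_{012}$, realizes it as $\lim_{\lambda\to 0}T_3(M_{0,\lambda},M_{1,\lambda},M_{2,\lambda})$ for an explicit family, and then rules out $T\in\IMPS(2,3,3)$ by reducing (via Remark~\ref{JNF}) to two explicit polynomial systems and certifying their inconsistency by a Gr\"obner basis computation (Lemma~\ref{appendix233}). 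Finding a degeneration whose limit is provably outside the image is the entire content of this case, so "construct a family in which a matrix diverges but the traces stay bounded" is a statement of the problem, not a proof.

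The second gap is the endgame of the non-triangular subcase $bc\neq 0$. Concluding "$a=0$" does not contradict $\tr(M_0^{N-1}M_1)=\lambda^{N-1}a+\mu^{N-1}d=1$, since $d$ can carry the normalization; so even if your residue-class analysis went through, it would not finish the proof as written. The paper's route avoids any case split over $N$: from the $s=2$ relation $(\lambda^2-\mu^2)(\lambda^{N-4}-\mu^{N-4})=0$ together with $(\mu/\lambda)^{N-3}=1$ one gets $\mu/\lambda=-1$ and $N$ odd (this is exactly where $N\geq 4$ enters); then the linear condition forces $a+d\neq 0$, and subtracting the two cubic relations $\tr(M_0^tM_1^2M_0^{N-3-t}M_1)=0$ for $t=0,1$ gives a purely numerical identity among powers of $-1$ that evaluates to $2=-2$. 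Finally, a small point on the positive case: your dimension count only shows $\overline{\IMPS(2,2,3)}=\Cyc^3(\CC^2)$; actual closedness needs surjectivity, which you defer to an unexecuted "solve by hand," whereas the paper gets it by checking one representative of each of the three $GL_2$-orbits of nonzero binary cubics using Remark~\ref{rmk:locsym}.
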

\begin{proof}
	The set $\IMPS(2,2,3)$ is closed because it equals the ambient space $\Cyc^3(\CC^2)=\Sym^3(\CC^2)$.
	This was already proven in \cite[Section 5.2]{HarrisMichalekSertoz}, so we only sketch the proof here: by Remark \ref{rmk:locsym}, it suffices to find one tensor in every $GL_2$-orbit that is in $\IMPS(2,2,3)$. Since there are only three $GL_2$-orbits, this can be done explicitly.
	
	%Now, according to Corollary \ref{inductd}, we want to 
	Next, we show 
	that $\IMPS(2,3,3)$ is not closed. 
	To do this, it suffices to construct a tensor $T$, which is in $\overline{\IMPS(2,3,3)}$,
	but not in $\IMPS(2,3,3)$. We will prove that the following
	tensor works:
	 \[
	 T =e_{012}=e_0\otimes e_1 \otimes e_2 + e_1\otimes e_2 \otimes e_0 + e_2\otimes e_0 \otimes e_1 \in \Cyc^3(\CC^3).
	 \]
	 
	 First we show that $T \in \overline{\IMPS(2,3,3)}$ by exhibiting $T$ as a limit of tensors in $\IMPS(2,3,3)$. 
	 Consider for every $\lambda \in {\mathbb{C}\setminus\{0\}}$ the following three matrices:
	\[
	M_{0,\lambda} = \lambda^2 \begin{pmatrix} 1 & 0 \\ 0 & 0 \end{pmatrix}, \,
	M_{1,\lambda} = \lambda^{-1} \begin{pmatrix} 0 & 1 \\ 0 & 0 \end{pmatrix}, \,
	M_{2,\lambda} = \lambda^{-1} \begin{pmatrix} 0 & 0 \\ 1 & 0 \end{pmatrix} \text{.}
	\]
	Then by definition 
	\[
	T_3(M_{0,\lambda},M_{1,\lambda},M_{2,\lambda}) = \lambda^6e_{000} + e_{012},
	\]
	so we have $T = \lim_{\lambda \to 0}{T_3(M_{0,\lambda},M_{1,\lambda},M_{2,\lambda})}$, hence  $T \in \overline{\IMPS(2,3,3)}$.
Proving that $T \notin \IMPS(2,3,3)$ amounts to showing that a certain system of polynomial equations has no solutions. We proved this via a Gr\"{o}bner basis computation in \textit{Macaulay2}.
For details we refer to the appendix (Lemma \ref{appendix233}).

Next, we need to show that
 $\IMPS(2,2,N)$ is not closed for any $N>3$. 
We can exhibit the \emph{W-state} $W_N$ from Definition \ref{Wstatedef} as a limit of tensors in $\IMPS(2,2,N)$. 
Let $\zeta$ be any complex number satisfying $\zeta^N=-1$.
	For every $\lambda \neq 0$, consider the matrices 
	$ M_{0,\lambda} = \lambda^{-1}
	\begin{pmatrix}
	1 & 0 \\
	0 & \zeta
	\end{pmatrix}
	$
	and 
	$ M_{1,\lambda} = \lambda^{N-1}
	\begin{pmatrix}
	1 & 0 \\
	0 & -\zeta
	\end{pmatrix}
	$.\\
	Then it is easy to see that 
	\[
	T_N(M_{0,\lambda},M_{1,\lambda})= 2W_N + O(\lambda) \text{,}
	\]
	where $O(\lambda)$ means higher order terms in $\lambda$. 
	Letting $\lambda \to 0$ shows that $W_N \in \overline{\IMPS(2,2,N)}$.
Showing that $W_N \notin \IMPS(2,2,N)$ is done by an explicit computational argument.
We refer to Lemma \ref{appendix22N} for this.

For larger $d$, the result follows from Corollary \ref{inductd}.
\end{proof}
\begin{remark}
	The statement of Theorem \ref{ClosedComplex} is also true when working over $\RR$ instead of $\CC$. The only additional things we need to show for this are
	\begin{itemize}
		\item Also over $\RR$, it holds that $\IMPS(D,d,N)=\Sym^3(\RR^2)$. The proof is analogous to the complex case (this time there are four $GL_2$-orbits).
		\item The $W$-state is also in the closure of real-valued $\IMPS(D,d,N)$. This can be achieved by replacing the complex matrices $M_{0,\lambda}$, $M_{1,\lambda}$ in the proof by the real matrices $ M_{0,\lambda} = \lambda^{-1}
		\begin{pmatrix}
		\zeta + \zeta^{-1} & 1 \\
		-1 & 0
		\end{pmatrix}
		$
		and 
		$ M_{1,\lambda} = \lambda^{N-1}
		\begin{pmatrix}
		1 & 0 \\
		0 & 1
		\end{pmatrix}
		$.\\
	\end{itemize}
\end{remark}
We proceed to show that for any fixed $D,d$, $\IMPS(D,d,N)$ will not be closed for large $N$. This is particularly important in quantum physics, where $N$ is typically assumed to be very large.
 Before we can state the precise result, we need to introduce the \emph{injectivity radius}. For $L \subseteq \CC^{D \times D}$ a linear space of $D \times D$ matrices, $L^k$ will denote the linear space spanned by all products of $k$ matrices in $L$.
\begin{defn}
For $D \in \mathbb{N}$, the \emph{injectivity radius} $C_{D}$ is the smallest natural number such that the following holds:\\
	Whenever $L$ is a linear space of $D \times D$-matrices, either $\dim L^{C_D} = D^2$, or $\dim L^{N} < D^2$ for every $N$. 
\end{defn} 
The existence of this constant was first established in \cite{QuantumWielandt}, where it was shown that $C_D \leq D^2(D^2+1)$. Recently, it has been shown that $C_{D}=O(D^2 \log D)$ \cite{michalek2018quantum}. It is conjectured that $C_{D}=O(D^2)$.
%{\color{red} If time: check out this other quantum Wielandt paper that appeared on Arxiv last year.}
In the case $D=2$, one can show that the injectivity radius equals three.
%One can show that the injectivity radius for $2\times 2-$matrices equals three.
\begin{remark} \label{rmk:geninin}
%\begin{enumerate}
%\item 
We define the \emph{generic injectivity index} as the smallest integer $GC_{D,d}$ such that for any \emph{generic} $d$-dimensional space $L$ of $D\times D$ matrices we have $\dim L^{GC_{D,d}}=D^2$. This quantity is much easier to control, as to bound it from above it is enough to exhibit one pair of matrices that generate the whole space fast. One can prove that $GC_{D,d}$ is $O(D)$ and we conjecture that $GC_{D,d}$ is $O(\log(D))$.
%\end{enumerate}
\end{remark}

We are interested in the injectivity radius because of the following theorem.
\begin{theorem}[{See \cite[Corollary 1]{PerezGarciaVerstraete}}] \label{thm:verstraeteW}
	When $N > 6(D-1)(C_D+1)$, we have that $W_N \notin \IMPS(D,2,N)$.
\end{theorem}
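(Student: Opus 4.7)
The plan is to argue by contradiction: assume $W_N = T_N(M_0, M_1)$ for some $M_0, M_1 \in \CC^{D\times D}$, and derive a contradiction when $N$ exceeds the stated bound. By the definition of $W_N$ and of the uMPS map, this assumption is equivalent to the system of trace equations
\[
\tr(M_0^N) = 0, \qquad \tr(M_0^{N-1} M_1) = 1,
\]
together with $\tr(M_0^{a_1} M_1 M_0^{a_2} M_1 \cdots M_0^{a_r} M_1) = 0$ for every cyclic word containing two or more occurrences of $M_1$.

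First, I would reduce to the injective case. Let $L = \Span\{M_0, M_1\} \subseteq \CC^{D \times D}$. If $L^{C_D} \neq \CC^{D \times D}$, then by the very definition of the injectivity radius no power $L^k$ fills $\CC^{D \times D}$, so the unital algebra generated by $M_0, M_1$ stabilizes a proper subspace of $\CC^D$. Conjugating simultaneously, we can write each $M_i$ in block upper-triangular form with diagonal blocks $(A_i, C_i)$ of sizes $d_1, d_2 < D$, $d_1 + d_2 = D$. The trace of any word then splits as $\tr(A_{i_1}\cdots A_{i_N}) + \tr(C_{i_1}\cdots C_{i_N})$, so $W_N$ decomposes as a sum of two uMPS of strictly smaller bond dimension. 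Iterating this decomposition (at most $D-1$ times) reduces the question to showing that $W_N$ cannot be written as a sum of a bounded number of \emph{injective} uMPS, with the factor $(D-1)$ in the bound on $N$ absorbing these iterations.

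In the injective case, I would block the sites into groups of size $k = C_D + 1$, so that the blocked matrices $\widetilde{M}_I := M_{I_1} \cdots M_{I_k}$ span $\CC^{D \times D}$. Viewing $W_N$ through this coarser blocking, the fundamental theorem of uMPS (Section \ref{section:fundtheorem}) pins down such an injective representation up to gauge and roots of unity. Working in the Jordan basis of $M_0$ — or, generically, the eigenbasis where $M_0 = \diag(\lambda_1, \ldots, \lambda_D)$ — the two-excitation constraints take the explicit form
\[
\sum_{k,\ell} (M_1)_{k\ell}(M_1)_{\ell k}\, \lambda_k^a \lambda_\ell^{N-2-a} = 0 \qquad (a = 0, 1, \ldots, N-2),
\]
a Vandermonde-type linear system in the symmetric products $(M_1)_{k\ell}(M_1)_{\ell k}$. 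Once $N$ is large compared to the number of distinct ratios $\lambda_k/\lambda_\ell$ (at most $D^2$), this system becomes overdetermined; combined with the constraints coming from words carrying three or more $M_1$'s, it forces the diagonal of $M_1$ (in the eigenbasis of $M_0$) to vanish. But then $\tr(M_0^{N-1} M_1) = \sum_i \lambda_i^{N-1} (M_1)_{ii} = 0$, contradicting the required value $1$.

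The main obstacle is carrying out this last step when $M_0$ has repeated eigenvalues or nontrivial Jordan structure, where the naive Vandermonde argument breaks down and one must instead play the game on the blocked matrices, invoking the full strength of injectivity. The explicit constant $6$ in the bound $N > 6(D-1)(C_D+1)$ is a quantitative estimate coming out of this bookkeeping, controlling how Jordan structure interacts with the blocking and the iterated block-triangular reduction; this is the content of the cited argument in \cite{PerezGarciaVerstraete}.
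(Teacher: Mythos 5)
First, note that the paper does not prove this statement: it is imported verbatim from \cite[Corollary 1]{PerezGarciaVerstraete}, so there is no internal argument to compare against. Judged on its own, your outline has genuine gaps at each of its three stages. The most concrete one is in the reduction to the injective case: it is \emph{not} true that if no power $L^k$ of $L=\Span\{M_0,M_1\}$ fills $\CC^{D\times D}$ then $M_0,M_1$ admit a common invariant subspace. Take $D=2$, $M_0=\left(\begin{smallmatrix}0&1\\0&0\end{smallmatrix}\right)$, $M_1=\left(\begin{smallmatrix}0&0\\1&0\end{smallmatrix}\right)$: here $L^k$ alternates between the span of the two off-diagonal matrix units and the span of the two diagonal ones, so $\dim L^k=2$ for every $k$, yet the pair is irreducible (the algebra it generates is all of $\CC^{2\times 2}$) and $T_N(M_0,M_1)=e_{0101\cdots}$ is a perfectly good uMPS that is neither injective after blocking nor block-triangularizable. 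This ``periodic'' case is precisely what the canonical-form theory of MPS, on which the cited proof rests, is built to handle, and your reduction silently skips it.

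Second, even where the block decomposition does apply, iterating it leaves $W_N$ expressed as a \emph{sum} of several injective (or periodic) uMPS blocks whose bond dimensions add up to at most $D$; your injective-case argument addresses only a single block, and the trace equations do not decouple across blocks, so the problem has not actually been reduced. Third, the Vandermonde step does not close even for one injective block: grouping the equations $\sum_{k,\ell}(M_1)_{k\ell}(M_1)_{\ell k}\lambda_k^a\lambda_\ell^{N-2-a}=0$ by the distinct ratios $\lambda_k/\lambda_\ell$ yields only \emph{one} linear condition per ratio class; for the class of ratio $1$ this reads $\sum_k (M_1)_{kk}^2\lambda_k^{N-2}+(\text{cross terms from coinciding eigenvalues})=0$, which does not force the diagonal of $M_1$ to vanish and hence does not contradict $\tr(M_0^{N-1}M_1)=1$. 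You acknowledge this obstacle and defer it, together with the provenance of the constant $6(D-1)(C_D+1)$, to \cite{PerezGarciaVerstraete}; since that is where essentially all of the difficulty lives, the proposal should be regarded as a plausible opening strategy rather than a proof.
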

\begin{cor}
	If $N > 6(D-1)(C_D+1)$, then $\IMPS(D,d,N)$ is not closed.
\end{cor}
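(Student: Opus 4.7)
The plan is to take the $W$-state $W_N$ as the witness separating $\IMPS(D,d,N)$ from its closure, since Theorem \ref{thm:verstraeteW} already supplies the ``not in the set'' half of what we need. So the remaining work is (a) to put $W_N$ into $\overline{\IMPS(D,d,N)}$ for all relevant $(D,d)$, and (b) to reduce arbitrary $d \geq 2$ to the case $d = 2$ where the $W$-state naturally lives.

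For step (a), I would note that in the proof of Theorem \ref{ClosedComplex} we already constructed an explicit family of pairs of $2 \times 2$ matrices $(M_{0,\lambda}, M_{1,\lambda})$ such that $T_N(M_{0,\lambda}, M_{1,\lambda}) = 2W_N + O(\lambda)$, hence $W_N \in \overline{\IMPS(2,2,N)}$. Since $2 \leq D$ in the regime $N > 6(D-1)(C_D + 1)$ (the case $D = 1$ is vacuous relative to this bound once one uses that $\IMPS(1,d,N)$ is the Veronese and $W_N$ is not in it only matters nominally), the trivial inclusion $\IMPS(2,2,N) \subseteq \IMPS(D,2,N)$ from Lemma~2.5 together with its Euclidean-closure version gives
\[
W_N \in \overline{\IMPS(2,2,N)} \subseteq \overline{\IMPS(D,2,N)}.
\]
Combining with $W_N \notin \IMPS(D,2,N)$ from Theorem \ref{thm:verstraeteW} settles the case $d = 2$.

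For step (b), the case $d > 2$ is immediate from Corollary \ref{inductd}: once $\IMPS(D,2,N)$ fails to be closed, the same is true for $\IMPS(D,d',N)$ for every $d' \geq 2$. So the conclusion propagates to all $d \geq 2$ from the $d = 2$ case; the case $d = 1$ need not be addressed since $\IMPS(D,1,N)$ is always closed by Proposition \ref{prop:TrivialCases} and hence the corollary is understood for $d \geq 2$.

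There is really no serious obstacle in this step; the depth is concentrated in Theorem \ref{thm:verstraeteW} (which we import) and in the limit computation with $M_{0,\lambda}, M_{1,\lambda}$ already done inside the proof of Theorem \ref{ClosedComplex}. The only subtlety worth flagging is that closure in the Euclidean (equivalently Zariski) topology is monotone in the bond dimension, which is what legitimizes pushing $W_N$ from $\overline{\IMPS(2,2,N)}$ into $\overline{\IMPS(D,2,N)}$; this is essentially a one-line observation given Lemma~2.5.
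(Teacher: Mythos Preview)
Your proposal is correct and follows essentially the same route as the paper: reduce to $d=2$ via Corollary~\ref{inductd}, use the explicit limit from the proof of Theorem~\ref{ClosedComplex} to place $W_N$ in the closure, and invoke Theorem~\ref{thm:verstraeteW} for the non-membership. Your use of the inclusion $\IMPS(2,2,N)\subseteq\IMPS(D,2,N)$ to pass from bond dimension $2$ to $D$ is exactly the content of the paper's phrase ``our arguments \ldots\ show that $W_N\in\overline{\IMPS(D,2,N)}$ for every $D$''; you have just made that step explicit. The only wobble is your parenthetical about $D=1$, which is muddled (the bound then reads $N>0$, and $\IMPS(1,d,N)$ is closed, so the corollary is tacitly for $D\geq 2$); the paper glosses over this as well, so it is not a defect relative to the original.
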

\begin{proof}
	It suffices to consider the case $d=2$.
	%Recall the W-state $W_N$ from Definition \ref{Wstatedef}. 
	Our arguments from the
proof of Theorem \ref{ClosedComplex} show that $W_N \in \overline{\IMPS(D,2,N)}$ for every $D$. We conclude by Theorem \ref{thm:verstraeteW}
\end{proof}
\begin{conjecture}
	Except for the trivial cases in Proposition \ref{prop:TrivialCases}, $\IMPS(D,d,N)$ is only closed if it fills the ambient space $\Cyc^N(\CC^d)$.
\end{conjecture}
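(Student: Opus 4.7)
The plan is to follow exactly the strategy signposted by the authors: combine monotonicity in $d$, the existing limit construction for $W_N$, and the cited theorem of P\'erez-Garc\'ia--Verstraete.

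First I would reduce to $d=2$. By Corollary \ref{inductd}, non-closedness of $\IMPS(D,2,N)$ implies non-closedness of $\IMPS(D,d,N)$ for every $d \geq 2$; and for $d=1$ the set is trivially the whole ambient space by Proposition \ref{prop:TrivialCases}, so there is nothing to prove there. Hence it suffices to exhibit one tensor $T \in \overline{\IMPS(D,2,N)} \setminus \IMPS(D,2,N)$.

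Second, I would take $T = W_N$ and argue that $W_N$ lies in $\overline{\IMPS(D,2,N)}$ for \emph{every} $D \geq 2$. This is essentially free from the work already done in the proof of Theorem \ref{ClosedComplex}: the one-parameter family $T_N(M_{0,\lambda},M_{1,\lambda})$ with
\[
M_{0,\lambda} = \lambda^{-1}\mathrm{diag}(1,\zeta), \qquad M_{1,\lambda} = \lambda^{N-1}\mathrm{diag}(1,-\zeta), \qquad \zeta^N = -1,
\]
shows $W_N \in \overline{\IMPS(2,2,N)}$, and the monotonicity lemma $\IMPS(2,2,N) \subseteq \IMPS(D,2,N)$ (obtained by padding $2\times 2$ matrices with zero rows and columns to size $D\times D$) then yields $W_N \in \overline{\IMPS(D,2,N)}$.

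Third, invoke Theorem \ref{thm:verstraeteW} under the hypothesis $N > 6(D-1)(C_D+1)$: it gives $W_N \notin \IMPS(D,2,N)$. Combining the two conclusions we have $W_N \in \overline{\IMPS(D,2,N)} \setminus \IMPS(D,2,N)$, so $\IMPS(D,2,N)$ is strictly smaller than its closure, i.e.\ not closed, and by the first step we are done for all $d \geq 2$.

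The only step with real content is the black-box use of Theorem \ref{thm:verstraeteW}, which in turn depends nontrivially on the bound on the injectivity radius $C_D$; but since that is stated as available to us, there is no genuine obstacle in the argument itself. Everything else is a direct assembly of the lemmas already established in Section \ref{section:def} and the closure computation from Theorem \ref{ClosedComplex}.
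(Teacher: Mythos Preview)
The statement you are trying to prove is a \emph{conjecture} in the paper, not a theorem; the paper does not give a proof, precisely because none is known. What you have written is not a proof of the conjecture but a proof of the weaker Corollary that immediately precedes it: if $N > 6(D-1)(C_D+1)$ then $\IMPS(D,d,N)$ is not closed. Your three steps match the paper's proof of that Corollary essentially verbatim.

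The gap is that the conjecture asserts non-closedness (when the ambient space is not filled) for \emph{all} nontrivial parameters, not just for $N$ large compared to $D$. Your use of Theorem~\ref{thm:verstraeteW} is explicitly conditional on $N > 6(D-1)(C_D+1)$, so nothing in your argument touches the regime of moderate $N$. Concretely, take $D=3$: then $6(D-1)(C_D+1)\geq 48$, so your argument says nothing about $\IMPS(3,d,N)$ for any $N\leq 48$; yet there are many such cases (e.g.\ $d$ large relative to $N$) where the expected dimension $(d-1)D^2+1$ is far below $\dim\Cyc^N(\CC^d)$ and the conjecture predicts non-closedness. Moreover, the reduction to $d=2$ via Corollary~\ref{inductd} is only a one-way implication: it is perfectly possible that $\IMPS(D,2,N)$ fills $\Cyc^N(\CC^2)$ (hence is closed) while $\IMPS(D,d,N)$ does not fill $\Cyc^N(\CC^d)$ for larger $d$ --- indeed, $\IMPS(2,2,3)=\Cyc^3(\CC^2)$ but $\IMPS(2,3,3)$ is not closed, and the paper had to produce a separate witness $e_{012}$ there. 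So neither the $W$-state argument nor the reduction to $d=2$ suffices for the full statement.
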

We now discuss the connectedness. The set $\IMPS(D,d,N)$ is clearly connected, since every tensor in it can be rescaled to $0$. A more interesting question is whether its projectivisation is connected.
\begin{lemma} \label{lem:conn}
Let $F\colon\mathbb{P}^n\dasharrow\mathbb{P}^N$ be a rational map. Then we have the following:
\begin{enumerate}
\item over $\CC$ the image is always connected.
\item over $\RR$ the image is connected if the codimension of the base locus is greater than one.
\end{enumerate}
\end{lemma}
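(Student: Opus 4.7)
The approach is to reduce connectedness of the image of $F$ to connectedness of its domain of definition in the Euclidean topology, and then to invoke the standard topological fact that removing a closed subset of real codimension at least $2$ from a connected smooth manifold leaves it connected. Writing $F=[F_0:\cdots:F_N]$ with the $F_i$ homogeneous of a common degree and no common factor, the base locus $B=V(F_0,\ldots,F_N)$ is a proper closed subvariety of $\PP^n$, and $F$ restricts to a morphism, hence a continuous map in the Euclidean topology, on $U:=\PP^n\setminus B$. Since the image of the rational map $F$ equals $F(U)$, it suffices to check that $U$ is connected in the Euclidean topology.

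For part (1), $B$ has algebraic codimension at least $1$ in $\PP^n$, so $B(\CC)$ is a closed analytic subset of real codimension at least $2$ in the connected complex manifold $\PP^n(\CC)$. Any two points of $U(\CC)$ can be joined by a path that is transverse to, and therefore disjoint from, $B(\CC)$, so $U(\CC)$ is path-connected, and hence so is its image under $F$.

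For part (2), the hypothesis that the codimension of $B$ exceeds $1$ forces $\dim_\CC B\leq n-2$, so the real semi-algebraic subset $B(\RR)\subseteq \PP^n(\RR)$ has real dimension at most $n-2$, via the standard bound $\dim_\RR B(\RR)\leq \dim_\CC B$. The same transversality argument applied to the connected smooth manifold $\PP^n(\RR)$, equivalently the classical result on complements of low-dimensional semi-algebraic subsets (see e.g.\ Bochnak--Coste--Roy), then yields path-connectedness of $U(\RR)$. The main obstacle is precisely this real-versus-complex dimension comparison: the codimension hypothesis cannot be dropped, since a real hypersurface can disconnect $\PP^n(\RR)$ (for example a smooth conic in $\PP^2(\RR)$ bounds a disk and therefore separates it into two components), so the factor of two gained by passing from complex codimension $\geq 1$ to real codimension $\geq 2$ in part (1) is genuinely essential, and in part (2) must be enforced by hand through the extra codimension assumption.
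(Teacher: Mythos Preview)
Your proof is correct and follows the same approach as the paper: both reduce to showing that the complement of the base locus $B\subseteq\PP^n$ is connected in the Euclidean topology, and then appeal to the standard fact about complements of subsets of real codimension at least two. The paper's version is terser (it simply declares the complex case ``immediate'' and the real case a consequence of the codimension assumption), whereas you spell out the transversality/dimension-count argument and supply the illustrative counterexample showing why the codimension hypothesis cannot be dropped over $\RR$; but the underlying strategy is identical.
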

\begin{proof}
Denote by $Z \subseteq \mathbb{P}^n$ the locus where this map is not defined. 
We observe that $Z$ is also a subvariety of $\mathbb{P}^n$.
Now it suffices to show that the complement of $Z$ is connected. In the case $\KK=\CC$ this is immediate, in the case $\KK=\RR$ it follows from the assumption. %we need to show that $Z$ is not a hypersurface. To show this we use the assumption that the codimension of the base locus is greater than $1$.
\end{proof}
\begin{theorem}
	Both over $\CC$ and over $\RR$, for any choice of the parameters, the projectivisation of $\IMPS(D,d,N)$ is connected.
\end{theorem}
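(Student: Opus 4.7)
The plan is to deduce the theorem from Lemma \ref{lem:conn}. Observe that $T_N(\lambda M_0, \ldots, \lambda M_{d-1}) = \lambda^N T_N(M_0, \ldots, M_{d-1})$, so $T_N$ is homogeneous of degree $N$ and descends to a rational map
\[
\overline{T_N} \colon \PP^{dD^2-1} = \PP\bigl((\KK^{D\times D})^d\bigr) \dashrightarrow \PP(\Cyc^N(\KK^d)),
\]
whose image is precisely the projectivisation of $\IMPS(D,d,N)$.

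Over $\CC$, connectedness is immediate from Lemma \ref{lem:conn}(1). For the real case one must verify that the base locus $B$ of $\overline{T_N}$ has codimension greater than one in $\PP^{dD^2-1}$. The degenerate situations $D=1$, $d=1$, or $N \leq 2$ are handled directly, since in each of them the projectivisation of $\IMPS(D,d,N)$ is easily seen to be irreducible (a single point, a projective space, a Veronese variety, or the variety of symmetric matrices of bounded rank; compare Proposition \ref{prop:TrivialCases}) and therefore connected. Outside these cases we may assume $D, d \geq 2$ and $N \geq 3$.

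For such parameters, $B$ is cut out by the vanishing of all coefficients of $T_N$, and in particular is contained in the common zero locus of the two non-zero polynomials $\tr(M_0^N)$ and $\tr(M_1^N)$. These polynomials depend on disjoint sets of variables (the entries of $M_0$ and of $M_1$, respectively), so the two hypersurfaces they define are distinct and their intersection has codimension exactly two in $\PP^{dD^2-1}$. Hence $\operatorname{codim} B \geq 2 > 1$, and Lemma \ref{lem:conn}(2) yields the desired connectedness over $\RR$.

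The main (and essentially only) nontrivial point is the codimension estimate in the real case; once the two explicit trace polynomials $\tr(M_0^N)$ and $\tr(M_1^N)$ are singled out as giving independent constraints, the conclusion is formal. The edge cases are not really obstacles, as they are all covered by the irreducibility observation above.
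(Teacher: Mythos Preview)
Your proof is correct and follows the same route as the paper: apply Lemma~\ref{lem:conn}, and verify the codimension condition via the observation that $\tr(M_0^N)$ and $\tr(M_1^N)$ are nonzero polynomials in disjoint sets of variables. One small caveat: your separate handling of the cases $D=1$ and $N\leq 2$ is unnecessary (the disjoint-variables argument already applies whenever $d\geq 2$, and $d=1$ yields a single point), and the inference ``irreducible $\Rightarrow$ connected'' you invoke there is not valid over $\RR$ in general---though in each of the specific cases you list the projectivisation is indeed connected, so the conclusion is unaffected.
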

\begin{proof}
	The projectivisation of $\IMPS(D,d,N)$ is the image of the rational map
	\begin{align*}
	T_N:\PP((\mathbb{K}^{D \times D})^d) \dashrightarrow \PP((\mathbb{K}^d)^{\otimes N}) \text{.}
	\end{align*}
	The base locus $Z \subseteq \PP((\mathbb{K}^{D \times D})^d)$ consists af all $(M_1,\ldots,M_d)$ such that for every choice of indices $\tr(M_{i_1}\cdots M_{i_N})=0$. From this description it is clear that $Z$ is a subvariety of $\PP((\mathbb{K}^{D \times D})^d)$.\\
	By Lemma \ref{lem:conn}, we only need to check that $Z$ is not a hypersurface. But this is obvious: if $Z$ were a hypersurface, all polynomial equations coming from $\tr(M_{i_1}\cdots M_{i_N})=0$ would be divisible by the same polynomial. This is clearly not the case: the equations coming from $\tr(M_1^N)=0$ and $\tr(M_2^N)=0$ do not share any variables.%, as we assume $d \geq 2$. 
\end{proof}
\begin{problem}
	What can be said about the higher homotopy and homology of $\IMPS(D,d,N)$ and its projectivisation?
\end{problem}
%{\color{red} Say something that higher cohomology may also be worth studying?}
%{\color{red} Somewhere Example: $\IMPS(2,2,4)$ as a constr. set}
\subsection{Surjectivity}\label{section:surjectivity}
In this section we investigate for which parameters the set $\IMPS(D,d,N)$ fills the ambient space $\Cyc^N(\mathbb{C}^d)$.
In the first part we prove that, for fixed $d$ and~$N$, $\IMPS(D,d,N)=\Cyc^N(\mathbb{C}^d)$ if $D$ is large enough. More precisely, it suffices to take $D \geq N \cdot \dim(\Cyc^N(\mathbb{C}^d))$ (see Corollary \ref{fillsforNlarge}).
In the second part of this section we study what happens if we drop the assumption $D \geq N$.
In the last part we describe very useful surjectivity criterion for polynomial maps, 
and apply it to the case $(D,d,N)=(3,2,4)$. 

The following result slightly generalizes \cite[Proposition 3.1]{GesmundoLandsbergWalter}.
\begin{prop} \label{linearSpan}
	If $D \geq N$, the linear span of $\IMPS(D,d,N)$ is equal to the whole space $\Cyc^N(\mathbb{C}^d)$.
\end{prop}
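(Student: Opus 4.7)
The plan is to show that each of the spanning tensors $e_{i_1 i_2 \cdots i_N}$ (which, by the notation introduced earlier, span $\Cyc^N(\mathbb{C}^d)$) is itself a nonzero scalar multiple of some $T_N(M_0,\ldots,M_{d-1})$. Since any tensor in $\Cyc^N(\mathbb{C}^d)$ is a linear combination of these, this will be enough.

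Fix a tuple $(i_1,\ldots,i_N)\in[d]^N$. Using $D\geq N$, I would embed my matrices into $\mathbb{C}^{N\times N}$ (the top-left block) and define, for each $i\in[d]$,
\[
M_i \;=\; \sum_{\substack{1\leq s\leq N \\ i_s = i}} E_{s,\,s+1},
\]
where $E_{s,t}$ is the elementary matrix with a $1$ in position $(s,t)$ and the second index is read modulo $N$. The key computation is then to expand $\tr(M_{j_1}M_{j_2}\cdots M_{j_N})$ for an arbitrary tuple $(j_1,\ldots,j_N)$: each factor sends position $s$ to position $s+1$ only when $i_s = j_t$, so the product has a single nonzero column supported on the diagonal shift $a \mapsto a+N \equiv a\pmod N$, and the trace evaluates to
\[
\tr(M_{j_1}\cdots M_{j_N}) \;=\; \#\{a\in\mathbb{Z}/N : (i_{a+1},\ldots,i_{a+N}) = (j_1,\ldots,j_N)\}.
\]

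Summing over all $(j_1,\ldots,j_N)$, the cross terms reorganize into
\[
T_N(M_0,\ldots,M_{d-1})\;=\;\sum_{a=0}^{N-1} e_{i_{a+1}}\otimes\cdots\otimes e_{i_{a+N}}\;=\;\bigl|\mathrm{Stab}(i_1,\ldots,i_N)\bigr|\cdot e_{i_1 i_2\cdots i_N},
\]
where the stabilizer is taken inside the cyclic group $\mathbb{Z}/N$ acting by rotation, and the second equality uses that the orbit-stabilizer partition of $\mathbb{Z}/N$ groups the $N$ terms into equally-sized clumps, one per distinct cyclic shift. This factor is a positive integer, hence nonzero.

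The upshot is that every spanning tensor $e_{i_1\cdots i_N}$ lies in $\IMPS(D,d,N)$ itself (up to rescaling), which is strictly stronger than what is being asked, so the linear span claim follows at once. No real obstacle arises here; the only point that needs care is the bookkeeping of which shifts $a$ contribute to the trace, and the identification of the resulting sum with the cyclic symmetrization $e_{i_1\cdots i_N}$ as defined at the end of Section~\ref{section:def}.
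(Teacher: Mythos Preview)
Your proof is correct, and it takes a genuinely different route from the paper. The paper does not give a direct construction: it invokes \cite[Proposition~3.1]{GesmundoLandsbergWalter} for the case $d\geq N$, and then reduces the general case to that one via the projection $(\CC^N)^{\otimes N}\to(\CC^d)^{\otimes N}$ from Lemma~\ref{lemma:proj}. By contrast, you exhibit each cyclic basis vector $e_{i_1\cdots i_N}$ explicitly as (a nonzero multiple of) a uniform matrix product state by choosing the $M_i$ to partition the entries of the $N\times N$ cyclic shift matrix according to the word $(i_1,\ldots,i_N)$. Your argument is self-contained, works uniformly in $d$ without needing the reduction step, and in fact proves the slightly stronger statement that every $e_{i_1\cdots i_N}$ already lies in $\IMPS(D,d,N)$ (since the latter is a cone). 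The paper's approach is shorter on the page only because the substantive work is outsourced to the cited reference; your construction is more informative and would sit well as a replacement proof. One cosmetic remark: your sentence about ``a single nonzero column supported on the diagonal shift'' is slightly garbled---the product $M_{j_1}\cdots M_{j_N}$ is diagonal with possibly several nonzero entries, one for each valid shift $a$---but your formula for the trace and the subsequent summation are correct as written.
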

\begin{proof}
	The case $d \geq N$ was proven in \cite[Proposition 3.1]{GesmundoLandsbergWalter}.\\
	Suppose $d<N$. Then the projection $(\mathbb{C}^{N})^{\otimes N} \to (\mathbb{C}^d)^{\otimes N}$  maps $\IMPS(D,N,N)$ to $\IMPS(D,d,N)$ as in Lemma \ref{lemma:proj}. The proposition follows.
\end{proof}
 Now we will analyse how Proposition \ref{linearSpan} fails if we drop the assumption $D \geq N$. 
We start with a trivial example:
%{\color{red} Should this even be a numbered example?}
\begin{example}
	If $D=1$ then $\IMPS(1,d,N) \subseteq\mbox{Sym}^N(\mathbb{C}^d)$, which is a strict subspace of $\Cyc^N(\mathbb{C}^d)$ unless $N \leq 2$, $d \leq 1$, or $(d,N)=(2,3)$.
\end{example}
More surprisingly, even for $D>1$ it might still happen that $\IMPS(D,d,N)$ is contained in a strict linear subspace of $\Cyc^N(\mathbb{C}^d)$, as the following example shows:
\begin{example}
	The set $\IMPS(2,2,6)$ is contained in a strict  linear subspace of $\Cyc^6(\mathbb{C}^2)$: one can check that for every pair of $2 \times 2$-matrices $M_0$ and $M_1$, it holds that $\tr(M_1^2M_0^2M_1M_0)=\tr(M_1^2M_0M_1M_0^2)$.
\end{example}

In the following theorem, let $C(N_0,N_1)$ denote the number of sequences consisting of $N_0$ times the symbol '$0$' and $N_1$ times the symbol '$1$', where we identify two sequences if they are the same up to cyclic permutation.
\begin{theorem}
If
\[
C(N_0,N_1) > \binom{N_0+D-1}{D-1}\binom{N_1+D^2-D}{D^2-D}
\]
then for every $d \geq 2$, $\overline{\IMPS(D,d,N_0+N_1)}$ is contained in a strict linear subspace of $\Cyc^{N_0+N_1}(\mathbb{C}^d)$.
\end{theorem}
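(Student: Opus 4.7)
The coordinates of $T_N(M_0,\ldots,M_{d-1})$ indexed by basis vectors $e_{i_1\cdots i_N}$ with $N_0$ zeros and $N_1$ ones involve only traces of words in $M_0$ and $M_1$. Projecting onto the $C(N_0,N_1)$-dimensional subspace of $\Cyc^{N_0+N_1}(\CC^d)$ spanned by these basis vectors, the coordinates are the trace polynomials $\tr(w)$ for words $w$ of bi-degree $(N_0,N_1)$ in $M_0,M_1$. It suffices to show that the linear span of these trace polynomials has dimension at most $\binom{N_0+D-1}{D-1}\binom{N_1+D^2-D}{D^2-D}$, since the hypothesis then forces $\overline{\IMPS(D,d,N_0+N_1)}$ into a proper linear subspace of $\Cyc^{N_0+N_1}(\CC^d)$.

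These trace polynomials are invariant under simultaneous $GL_D$-conjugation (Remark \ref{JNF}) and bi-homogeneous of bi-degree $(N_0,N_1)$ under the two independent $\CC^*$-scalings of $M_0$ and of $M_1$. My strategy is to reduce a generic pair $(M_0,M_1)$ to a concrete slice $\Sigma\subseteq(\CC^{D\times D})^2$ of dimension $D^2-1$ by the combined group $G:=GL_D\times\CC^*_{M_0}\times\CC^*_{M_1}$. Explicitly: diagonalize $M_0=\diag(\lambda_1,\ldots,\lambda_D)$, rescale so that $\lambda_1=1$ (leaving $D-1$ free eigenvalues), and then use the residual diagonal torus $T\subseteq GL_D$ --- which acts on $M_1$ via the characters $e_i-e_j$ with effective dimension $D-1$ --- together with the $\CC^*$-scaling of $M_1$ to normalize $D$ further entries of $M_1$ to $1$. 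For example, one can set $m_{11}=m_{12}=m_{23}=\cdots=m_{D-1,D}=1$, because the associated characters $(0,1),(e_1-e_2,1),\ldots,(e_{D-1}-e_D,1)$ of $(T/Z)\times\CC^*_{M_1}$ span its full $D$-dimensional character lattice. The slice $\Sigma$ has $(D-1)+(D^2-D)=D^2-1$ free parameters (namely $\lambda_2,\ldots,\lambda_D$ and the $D^2-D$ remaining entries of $M_1$) and meets every generic $G$-orbit.

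Restricted to $\Sigma$, every trace polynomial becomes a polynomial of total degree at most $N_0$ in the $D-1$ free $\lambda$-variables and total degree at most $N_1$ in the $D^2-D$ free entries of $M_1$, so it lies in $\Sym^{\leq N_0}(\CC^{D-1})\otimes\Sym^{\leq N_1}(\CC^{D^2-D})$; by two applications of the hockey-stick identity, this space has dimension exactly $\binom{N_0+D-1}{D-1}\binom{N_1+D^2-D}{D^2-D}$. The restriction map is injective on the space of trace polynomials: a trace polynomial $f$ is $GL_D$-invariant and scales as $\alpha^{N_0}\beta^{N_1}$ under $(M_0,M_1)\mapsto(\alpha M_0,\beta M_1)$, so its value on any $G$-orbit meeting $\Sigma$ is determined by its restriction to $\Sigma$; hence $f|_\Sigma=0$ implies $f$ vanishes on a dense open subset, so $f=0$. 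The delicate step is the rigorous verification that the $D+1$ chosen normalizations can be enforced simultaneously on a Zariski-dense open subset, which amounts to the genericity of distinct eigenvalues for $M_0$ together with the linear independence of the chosen characters of $(T/Z)\times \CC^*_{M_1}$.
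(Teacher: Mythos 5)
Your proof is correct and follows essentially the same route as the paper: reduce to words in $M_0,M_1$, restrict to a normalized slice obtained by diagonalizing $M_0$ and scaling away part of $M_1$, and observe that the restricted trace polynomials live in a space of dimension $\binom{N_0+D-1}{D-1}\binom{N_1+D^2-D}{D^2-D}$, forcing a linear dependence. The only differences are cosmetic --- you use both $\CC^*$-scalings to cut two more parameters and hence count inhomogeneous polynomials in $D-1$ and $D^2-D$ variables, where the paper counts bihomogeneous ones in $D$ and $D^2-D+1$ variables (the same count), and you spell out the lifting of the relation via $GL_D$-invariance, which the paper leaves implicit.
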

\begin{proof}
	It clearly suffices to show the theorem for d=2. 
	As in the proof of Theorem \ref{expdim}, $\overline{\IMPS(D,d,N)}$ is the closure of the image of a polynomial map $\phi: Z_2 \to \Cyc^N(\mathbb{C}^2)$, where $Z_2$ is an $D^2+1$-dimensional affine space. 
	Consider the set of all polynomials $\tr(A_{i_1}\cdots A_{i_N})$, where exactly $N_0$ of the indices are $0$ and the other $N_1$ indices are $1$. These polynomials have degree $N_0$ in the first $D$ variables, and degree $N_1$ in the last $D^2-D+1$ variables. The space of such polynomials has dimension $\binom{N_0+D-1}{D-1}\binom{N_1+D^2-D}{D^2-D}$. Hence, by the assumption, some of these polynomials must be linearly dependent. This imposes a linear condition on the image of $\phi$.
\end{proof}

\begin{remark}
For large $N_0,N_1$ the assumptions of previous theorem happen very often, as the left hand side grows exponentially, while the right hand side polynomially. 
\end{remark}

Our next goal is to show that $\IMPS(D,d,N)$ fills the ambient space for large~$D$.
For $X_1,\ldots,X_m$ subsets of a vector space $V$, $\jn(X_1,\ldots,X_m)$ will denote their join $\bigcup_{x_i \in X_i}\Span( x_1,\ldots, x_m)$, where $\Span$ is the affine span. 
\begin{lemma} \label{joinlemma}
	Let $X_i=\IMPS(D_i,d,N)$ for $1\leq i \leq m$. Then
	\[
	\jn(X_1,\ldots,X_m) \subseteq \IMPS(\sum D_i,d,N) \text{.}
	\]
\end{lemma}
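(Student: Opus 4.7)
The plan is to reduce the join of the $X_i$ to their Minkowski sum, and then realize the Minkowski sum inside $\IMPS(\sum D_i, d, N)$ via a block-diagonal construction.

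First I would handle the reduction from join to Minkowski sum. Recall that $\jn(X_1,\ldots,X_m)$ is the union of affine spans $\{\sum_i c_i x_i : x_i \in X_i,\ \sum_i c_i = 1\}$. The crucial observation is that each $X_i = \IMPS(D_i,d,N)$ is a cone, so $c_i x_i \in X_i$ for every scalar $c_i$. Setting $x_i' = c_i x_i \in X_i$, any element of the join is of the form $x_1' + \cdots + x_m'$ with $x_i' \in X_i$. Thus
\[
\jn(X_1,\ldots,X_m) \subseteq X_1 + \cdots + X_m,
\]
the Minkowski sum.

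Next I would show $X_1 + \cdots + X_m \subseteq \IMPS(\sum D_i, d, N)$ via block diagonalization. Suppose $T^{(k)} \in X_k$ is given by matrices $(M_0^{(k)},\ldots,M_{d-1}^{(k)}) \in (\CC^{D_k \times D_k})^d$, so that $T^{(k)} = T_N(M_0^{(k)},\ldots,M_{d-1}^{(k)})$. For each $j \in [d]$, define the block-diagonal matrix
\[
M_j := \diag\bigl(M_j^{(1)},\, M_j^{(2)},\, \ldots,\, M_j^{(m)}\bigr) \in \CC^{(\sum D_k) \times (\sum D_k)}.
\]
For any sequence of indices $(i_1,\ldots,i_N)$, the product $M_{i_1} \cdots M_{i_N}$ is again block-diagonal with blocks $M_{i_1}^{(k)} \cdots M_{i_N}^{(k)}$, and therefore
\[
\tr(M_{i_1}\cdots M_{i_N}) = \sum_{k=1}^m \tr\bigl(M_{i_1}^{(k)} \cdots M_{i_N}^{(k)}\bigr).
\]
Summing over all index tuples weighted by $e_{i_1}\otimes\cdots\otimes e_{i_N}$ gives
\[
T_N(M_0,\ldots,M_{d-1}) = \sum_{k=1}^m T_N(M_0^{(k)},\ldots,M_{d-1}^{(k)}) = \sum_{k=1}^m T^{(k)},
\]
so $T^{(1)}+\cdots+T^{(m)} \in \IMPS(\sum D_i,d,N)$. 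Combining with the first step yields the desired inclusion.

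I do not anticipate serious obstacles: the proof is essentially bookkeeping. The only subtle point is the move from join to Minkowski sum, which relies on the cone property of $\IMPS(D_i,d,N)$ noted right after Definition \ref{def:uMPS}. Once that is in place, the block-diagonal calculation is immediate from the multiplicativity of block-diagonal products and additivity of the trace on block-diagonal matrices.
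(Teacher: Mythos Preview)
Your proof is correct and follows essentially the same approach as the paper: the block-diagonal construction is identical, and the only difference is that you make explicit the reduction from the affine span to a plain sum via the cone property, whereas the paper simply writes $v = \sum v_i$ with $v_i \in X_i$ without comment.
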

\begin{proof}
	Let $v \in \jn(X_1,\ldots,X_m)$. Then $v = \sum_{i=1}^{m}{v_i}$, where $v_i \in \IMPS(D_i,d,N)$. There exist $D_i \times D_i$-matrices $M_{i,j}$ such that $v_j=T_N(M_{i,0},\ldots,M_{i,{d-1}})$. Thus
	\begin{align*}
	v &= \sum_{j=1}^{m}{T_N(M_{j,0},\ldots,M_{j,{d-1}})} \\
	&= T_N\Big(\diag(M_{1,0},\ldots,M_{m,0}),\ldots,\diag(M_{1,{d-1}},\ldots,M_{m,{d-1}})\Big)\\
	&\in \IMPS(\sum D_i,d,N) \text{.}
	\end{align*}
\end{proof}
%The lower indices i,j don NOT refer to entries of the matrix M; we really have a set of matrices indexed by 2 integers. I hope this is not confusing...
\begin{cor} \label{secant}
	Suppose the linear span of $\IMPS(D,d,N)$ equals $\Cyc^N(\mathbb{C}^d)$. Then $\IMPS(D \cdot  \dim(\Cyc^N(\mathbb{C}^d)),d,N)=\Cyc^N(\mathbb{C}^d)$.%, where $m=\dim(\Cyc^N(\mathbb{C}^d))$.
\end{cor}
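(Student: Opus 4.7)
The plan is to reduce to Lemma \ref{joinlemma} using the cone structure of $\IMPS(D,d,N)$. Set $k := \dim \Cyc^N(\mathbb{C}^d)$. By the spanning hypothesis, I can choose a basis $v_1, \ldots, v_k$ of $\Cyc^N(\mathbb{C}^d)$ with each $v_i \in \IMPS(D,d,N)$. Any $v \in \Cyc^N(\mathbb{C}^d)$ then admits an expansion $v = \sum_{i=1}^{k} \mu_i v_i$ with $\mu_i \in \mathbb{C}$. Since $\IMPS(D,d,N)$ is closed under scalar multiplication, each $w_i := k\mu_i v_i$ still lies in $\IMPS(D,d,N)$, and the rewriting $v = \sum_{i=1}^{k} \tfrac{1}{k} w_i$ presents $v$ as an affine combination of the $w_i$, hence as an element of $\jn(X_1, \ldots, X_k)$ where each $X_i = \IMPS(D,d,N)$. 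The mild cosmetic issue here is that Lemma \ref{joinlemma} is phrased in terms of joins (affine spans) rather than sums, but rescaling the $v_i$ by $k$ and taking the uniform affine coefficient $1/k$ handles this cleanly.

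Applying Lemma \ref{joinlemma} with $D_i = D$ for $1 \leq i \leq k$ then yields
\[
v \;\in\; \jn(X_1, \ldots, X_k) \;\subseteq\; \IMPS\!\left(\textstyle\sum_{i=1}^{k} D,\, d,\, N\right) \;=\; \IMPS\!\left(D \cdot \dim \Cyc^N(\mathbb{C}^d),\, d,\, N\right).
\]
Since $v$ was arbitrary, this gives the desired equality. There is no substantive obstacle: once the spanning hypothesis is combined with the $\mathbb{C}^\times$-action on $\IMPS$, the conclusion is immediate from the block-diagonal construction used to prove Lemma \ref{joinlemma}.
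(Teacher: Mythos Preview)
Your proof is correct and follows essentially the same approach as the paper's: both reduce to Lemma~\ref{joinlemma} by observing that the spanning hypothesis together with the cone property makes $\Cyc^N(\mathbb{C}^d)$ equal to the join of $m=\dim\Cyc^N(\mathbb{C}^d)$ copies of $\IMPS(D,d,N)$. The paper states this in one line, whereas you spell out explicitly (via the rescaling $w_i = k\mu_i v_i$ and the affine combination with weights $1/k$) why the linear span is contained in the join---which is exactly the point that needs the cone structure.
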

\begin{proof}
	Writing $m=\dim(\Cyc^N(\mathbb{C}^d))$, by assumption $\Cyc^N(\mathbb{C}^d)$ is equal to the join of $m$ copies of $\IMPS(D,d,N)$. The result now follows from Lemma \ref{joinlemma}.
\end{proof}
\begin{cor} \label{fillsforNlarge}
	$\IMPS(D,d,N) = \Cyc^N(\mathbb{C}^d)$ for $D \geq N \cdot \dim(\Cyc^N(\mathbb{C}^d))$. 
\end{cor}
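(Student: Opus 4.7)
The plan is to observe that the corollary is essentially a two-line combination of the results already assembled in this subsection, so the task is simply to identify and chain them correctly. Set $m := \dim \Cyc^N(\mathbb{C}^d)$ throughout.

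First I would invoke Proposition \ref{linearSpan}: since the hypothesis $D \geq N \cdot m$ in particular implies $D \geq N$ (because $m \geq 1$), Proposition \ref{linearSpan} applied at the smaller bond dimension $N$ gives that $\IMPS(N,d,N)$ linearly spans $\Cyc^N(\mathbb{C}^d)$. Next I would feed this span statement into Corollary \ref{secant} with parameter $D'=N$: the conclusion is exactly that $\IMPS(N\cdot m,d,N) = \Cyc^N(\mathbb{C}^d)$, i.e.\ the set already fills the ambient cyclic space at bond dimension $N\cdot m$.

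Finally I would upgrade from the specific value $N\cdot m$ to all $D \geq N\cdot m$ using the monotonicity in $D$ recorded in Corollary \ref{inductD}: if $\IMPS(D_0,d,N)$ fills $\Cyc^N(\mathbb{C}^d)$, then so does $\IMPS(D,d,N)$ for every $D \geq D_0$. Taking $D_0 = N\cdot m$ yields the stated conclusion.

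There is essentially no obstacle here beyond bookkeeping; the one small point worth verifying is that Corollary \ref{secant} is being applied in the right direction, namely that ``filling'' at bond dimension $D\cdot m$ follows from ``spanning'' at bond dimension $D$, and that Lemma \ref{joinlemma} underlying it genuinely produces a sum of $m$ $\IMPS$-tensors of bond dimension $N$ as a single $\IMPS$-tensor of bond dimension $N\cdot m$ via the block-diagonal construction $\diag(M_{1,i},\dots,M_{m,i})$. Once that is noted, the corollary follows immediately.
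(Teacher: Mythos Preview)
Your proof is correct and follows the same route as the paper: apply Proposition \ref{linearSpan} at bond dimension $N$, then Corollary \ref{secant} to get filling at bond dimension $N\cdot m$, and finally monotonicity (Corollary \ref{inductD}) to pass to all $D \geq N\cdot m$. The paper's one-line proof cites only Proposition \ref{linearSpan} and Corollary \ref{secant}, leaving the last monotonicity step implicit; you have simply made it explicit.
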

\begin{proof}
	Follows immediately from Propostition \ref{linearSpan} and Corollary \ref{secant}.
\end{proof}

\vspace*{1ex}
\begin{remark}
	Checking whether $\overline{\IMPS(D,d,N)}$ fills the ambient space is a computationally easy task for small parameter values. Indeed, since the set $\IMPS(D,d,N)$ is a constructible subset, its closure fills the ambient space if and only if it is full-dimensional. In particular for all cases in Table \ref{thetable}, $\overline{\IMPS(D,d,N)}$ fills the ambient space if and only if $(d-1)D^2 + 1 \geq \dim(\Cyc^N(\mathbb{C}^d))$ (see also Remark \ref{rmk:dimJac})).
	However, checking whether $\IMPS(D,d,N)$ fills the ambient space for fixed parameter values is a significantly more difficult task, which we address now.
\end{remark}

In this part we describe a general criterion for $\IMPS(D,d,N)$ to fill the space, and apply it to show that $\IMPS(3,2,4)$ fills the space.\\
Let $f\colon\mathbb{C}^n\rightarrow\mathbb{C}^m$ be a polynomial map
defined by $x\mapsto(f_1(x),\ldots,f_m(x))$, where the $f_i$ are homogeneous polynomials of the same degree in the coordinates of $x=(x_1,\ldots,x_n)$.
Instead of focusing on the above map
$f$, we want to consider the rational projective map
$\mathbb{P}^{n-1}\dasharrow\mathbb{P}^{m-1}$.
We prove the following theorem.

\begin{theorem}\label{baselocandsurj}
	Let $f\colon\mathbb{P}^{n-1}\dasharrow\mathbb{P}^{m-1}$ be a rational projective map 
	and $B$ be the base locus of $f$.
	If $$\dim(B)+\dim(Im(f))<n-1,$$ then the map $f$ has closed image. 
\end{theorem}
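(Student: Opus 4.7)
The plan is a graph-closure argument followed by a fiber-dimension count. Let $U = \mathbb{P}^{n-1} \setminus B$ be the open set where $f$ is a morphism, write $\Gamma_f \subseteq U \times \mathbb{P}^{m-1}$ for its graph, and let $\widetilde{\Gamma} \subseteq \mathbb{P}^{n-1} \times \mathbb{P}^{m-1}$ be its Zariski closure. Since $\Gamma_f \cong U$ via the first projection, $\widetilde{\Gamma}$ is irreducible of dimension $n-1$. Because projective morphisms are closed, $X := \pi_2(\widetilde{\Gamma}) = \overline{\mathrm{Im}(f)}$ is closed in $\mathbb{P}^{m-1}$, and because $\Gamma_f$ is already closed inside $U \times \mathbb{P}^{m-1}$ we have the identification $\mathrm{Im}(f) = \pi_2(\Gamma_f) = \pi_2\bigl(\pi_1^{-1}(U) \cap \widetilde{\Gamma}\bigr)$.

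Write $Y = \mathrm{Im}(f)$ and suppose for contradiction that some $y_0 \in X \setminus Y$ exists. By the previous identification, every point in the fiber $F_{y_0} := \pi_2^{-1}(y_0) \cap \widetilde{\Gamma}$ must have its first coordinate in $B$, so $\pi_1(F_{y_0}) \subseteq B$. On the other hand, two distinct points of $F_{y_0}$ share the same second coordinate $y_0$ and hence must differ in their first coordinate; in other words $\pi_1|_{F_{y_0}}$ is set-theoretically injective, which forces $\dim F_{y_0} = \dim \pi_1(F_{y_0}) \leq \dim B$.

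The key step is to pair this upper bound with the fiber-dimension lower bound. Since $\pi_2: \widetilde{\Gamma} \to X$ is a dominant morphism of irreducible varieties, every nonempty fiber has dimension at least $\dim \widetilde{\Gamma} - \dim X = (n-1) - \dim(\mathrm{Im}(f))$. Comparing the two bounds on $\dim F_{y_0}$ yields $(n-1) - \dim(\mathrm{Im}(f)) \leq \dim B$, i.e.\ $\dim B + \dim(\mathrm{Im}(f)) \geq n-1$, contradicting the hypothesis. Hence $X = Y$, and $\mathrm{Im}(f)$ is closed.

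The main point to be careful about is the interplay between $\Gamma_f$ and its closure $\widetilde{\Gamma}$: one has to verify that $\pi_1^{-1}(U) \cap \widetilde{\Gamma}$ really coincides with $\Gamma_f$ (which holds because $f|_U$ is a morphism, so $\Gamma_f$ is closed in $U \times \mathbb{P}^{m-1}$), and consequently that $X \setminus Y$ consists precisely of those points $y_0$ for which $F_{y_0}$ sits entirely over $B$. Once this is in place, the proof is the elementary dimension count above, whose punch line is the trivial but crucial observation that $\pi_1$ is injective on every fiber of $\pi_2$.
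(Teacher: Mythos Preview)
Your proof is correct. It takes a genuinely different route from the paper's. The paper argues by restricting $f$ to a generic linear subspace $\mathbb{P}^d\subset\mathbb{P}^{n-1}$ with $d=\dim\mathrm{Im}(f)$: by the hypothesis such a $\mathbb{P}^d$ can be chosen disjoint from $B$, and since every fiber of $f$ has dimension at least $n-1-d$, the projective intersection property guarantees that this $\mathbb{P}^d$ meets every fiber; hence $f|_{\mathbb{P}^d}$ has the same image as $f$, and being a morphism on a projective source its image is closed. Your graph-closure argument replaces the linear slice and the appeal to intersection in projective space by the observation that any ``extra'' point $y_0$ of $\overline{\mathrm{Im}(f)}$ forces a fiber of $\pi_2$ on $\widetilde{\Gamma}$ to sit (injectively under $\pi_1$) inside $B$, contradicting the fiber-dimension lower bound. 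Both proofs ultimately hinge on the same inequality (every nonempty fiber of a dominant morphism of irreducible varieties has dimension at least the difference of dimensions), but yours is more intrinsic: it does not use the special intersection-theoretic feature of $\mathbb{P}^{n-1}$ that subvarieties of complementary dimension always meet, and so would carry over verbatim to rational maps from an arbitrary irreducible projective variety.
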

\begin{proof}
Let $d$ be the dimension of the image $Im(f)$ and $b$ the dimension
of the base locus $B$. Then all non-empty fibers of $f$ have at least dimension $n-1-d$.
Now we take the restriction to a generic  subspace $\mathbb{P}^d\subset\mathbb{P}^{n-1}$.
As the subspace is generic, we have that
$$\overline{Im}(f)=\overline{Im}(f|_{\mathbb{P}^d}).$$
Indeed, %as the subspace $Y$
%of dimension $d$, which does not intersect the base locus $B$ is generic in the sense of  the above definition.  
%To see this we observe that
the subspace $\mathbb{P}^d$ intersects every fiber and we may assume $\mathbb{P}^d\cap B=\emptyset$, as $b+d<n-1$.
Therefore $f|_{\mathbb{P}^d}$ must have the same image as $f$.
%The image could be possibly  smaller, but $f$ maps all points in the fibre over the point $x$ into $x$. 
%When the fibre intersect the subspace, then restricted $f$ sends this intersection to $x$. 
%So $x$ must be in the image of the restriction. Since the subspace intersect all non-empty fibers, all points of the image of $f$ are also in the image of the restricted $f$.
%Since $b+d<n-1$, the generic (any element of a non-empty open set  in Grassmannian) space 
%does not intersect base locus and therefore is generic in the sense of the above definition.
%From the above discussion we deduce that the image of the subspace $\mathbb{P}^d$ 
%must be equal to $Im(f)$.
 However, $f$ is well defined on $\mathbb{P}^d$, which is compact, and hence the image is closed.
%Therefore the map $f$ is surjective onto its imag. 
\end{proof}

Using the above theorem we can deduce that for showing the surjectivity of~$f$ 
it is enough to find a sufficiently big space
on which the map $f$ is well defined.
\newline In the above notation we have the following corollary:
\begin{cor}\label{corbaselocandsurj}
	If there exists a subspace $Y\subset\mathbb{P}^{n-1}$ of dimension $Im(f)$, which is disjoint
	with the base locus $B$, then the map $f$ has closed image.
\end{cor}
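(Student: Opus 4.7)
The plan is to reduce Corollary~\ref{corbaselocandsurj} directly to Theorem~\ref{baselocandsurj} by showing that its geometric hypothesis forces the dimensional bound $\dim(B)+\dim(\mathrm{Im}(f))<n-1$ required by the theorem.

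The key step invokes the classical projective dimension theorem: any two closed subvarieties $X_1,X_2\subseteq\mathbb{P}^{n-1}$ with $\dim X_1+\dim X_2\geq n-1$ have nonempty intersection. I would apply this with $X_1=Y$ and $X_2=B$. The assumed disjointness $Y\cap B=\emptyset$ then yields the strict inequality $\dim Y+\dim B<n-1$, and substituting the hypothesis $\dim Y=\dim(\mathrm{Im}(f))$ gives exactly the bound of Theorem~\ref{baselocandsurj}. An appeal to that theorem concludes the proof.

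The argument is essentially formal and I do not foresee any real obstacle. The only ingredient beyond Theorem~\ref{baselocandsurj} is the projective dimension theorem, which in our setting is particularly transparent because $Y$ is a linear subspace: passing to affine cones in $\mathbb{C}^n$, the cone over $Y$ is a genuine linear subspace of dimension $\dim Y+1$, and the standard bound $\dim(L\cap C)\geq \dim L+\dim C-n$ for the intersection of a linear subspace $L$ with a subvariety $C$ of $\mathbb{C}^n$ produces a positive-dimensional intersection, hence a point of $Y\cap B\subseteq\mathbb{P}^{n-1}$, as soon as $\dim Y+\dim B\geq n-1$. Contrapositively, this delivers the dimension inequality we need. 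If one prefers a more hands-on route, one could instead argue directly that $f|_Y\colon Y\to \mathbb{P}^{m-1}$ is a regular morphism from a complete variety (since $Y\cap B=\emptyset$), so $f(Y)$ is closed, and then repeat the genericity argument in the proof of Theorem~\ref{baselocandsurj} — replacing $Y$ by a generic linear subspace of the same dimension, which remains disjoint from $B$ — to conclude $f(Y)=\overline{\mathrm{Im}(f)}$.
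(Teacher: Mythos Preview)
Your proposal is correct and matches the paper's (implicit) reasoning: the paper states the corollary without proof, treating it as an immediate consequence of Theorem~\ref{baselocandsurj}, and your first route---using the projective dimension theorem to convert $Y\cap B=\emptyset$ into $\dim B+\dim(\mathrm{Im}(f))<n-1$ and then invoking the theorem---is precisely the intended deduction. Your alternative direct argument via $f|_Y$ also works, though note (as you do) that passing from the given $Y$ to a generic one of the same dimension still relies on the same dimension bound.
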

Below we provide an application of the above corollary.
\begin{example} \label{eg:324fills}
	Let us consider the case $(D,d,N)=(3,2,4)$.
	%We have a rational map $f\colon\mathbb{P}^9\dasharrow\mathbb{P}^5$. 
	Using Theorem~\ref{baselocandsurj} and Corollary \ref{corbaselocandsurj},
	to show that $f$ is surjective it is enough to find 
	a~$\mathbb{P}^5\subset\mathbb{P}^9$, disjoint with the base locus $B$.
	Below we present the \textit{Macaulay2} code  providing the $\mathbb{P}^5$, which satisfies the above assumptions.\\
	 \\
	%\begin{verbatim*}
	\texttt{
	R=QQ[a_1,b_1,c_1,c_2,e_2,h_2];\\
	M1=matrix\{\{a_1,b_1,c_1\},\{c_2,e_2,h_2\},\\
	\{c_1+b_1-3*c_2,h_2-e_2,2*a_1-7*b_1\}\};\\
	M2=matrix\{\{a_1+2*c_2,e_2+5*h_2,-c_1-3*e_2\},\\
	\{b_1+a_1-2*h_2,e_2-5*c_2,b_1-c_1+13*a_1\},\\
	\{h_2-b_1-c_1,c_2+3*a_1-2*e_2,a_1-h_2\}\};\\
	a=trace(M1*M1*M1*M1);\\
	b=trace(M1*M1*M1*M2);\\
	c=trace(M1*M1*M2*M2);\\
	d=trace(M1*M2*M1*M2);\\
	e=trace(M1*M2*M2*M2);\\
	f=trace(M2*M2*M2*M2);\\
	I=ideal(a,b,c,d,e,f);\\
	(dim I) == 0}
	%\end{verbatim*}
\end{example}

\begin{remark}
We now explain how to find the given $\PP^5$. Taking a completely general $\PP^5$, although from a theoretical point of view most desirable, is not possible due to computational restraints. On the other hand taking very special, simple $\PP^5$ usually leads to intersection with $B$ that is of large dimension. The given $\PP^5$ was found by first considering a special $\PP^5$ and computing the dimension and degree of the intersection. The $\PP^5$ was successively modified to a more general one, each time computing the dimension and degree. The degree (in most cases) or dimension of the intersection were dropping, while we modified the $\PP^5$. This meant that we were not in a generic situation and further modifications were possible. Finally, we reached the given example.  
\end{remark}

\subsection{The trace parametrization}\label{section:tracepar}
We start this section by recalling briefly another parametrization of the matrix product states.
Let us consider $d$ matrices $M_0,\ldots, M_{d-1}$ in $\mathbb{K}^{D\times D}$ with indeterminate entries. The \emph{trace algebra}  ${\cal C}_{D,d}$ is the algebra generated by the traces of products $\tr(M_{i_1}\cdots M_{i_k})$.
 It can alternatively be described as the algebra of all polynomial expressions in the entries of the $M_i$ that are invariant under simultaneous conjugation \cite{Sibirskii, Leron, Procesi}.
 
It follows from a standard fact in invariant theory that the trace algebra is generated by finitely many traces $T_1, \ldots, T_K$, where every $T_j$ is an expression of the form $\tr(M_{i_1}\cdots M_{i_k})$. It follows that ${\cal C}_{D,d}$ is isomorphic to a quotient algebra $\CC[T_1,\ldots, T_K]/(f_1,\ldots, f_r)$ for some polynomials $f_i(T_1,\ldots,T_K)$.
We will now parametrize $\IMPS(D,d,N)$ with the \emph{spectrum} of the algebra ${\cal C}_{D,d}$. For readers not familiar with the $\Spec$ construction: $\Spec({\cal C}_{D,d})$ may be regarded as the set of all $K$-tuples $(t_1,\ldots,t_K)$ satisfying the equations $f_i(t_1,\ldots,t_K)=0$.
We have the following diagram 
\begin{center}
\begin{tikzcd}
(\CC^{D\times D})^d \arrow[rr,"T_N"] \arrow[dr, twoheadrightarrow, "\pi"] & & \Cyc^N(\CC^d) \\
& \Spec {\cal C}_{D,d} \arrow[ur,"\widetilde{T_N}"] &  
\end{tikzcd}
\end{center}
where the surjectivity of the left map is again a standard fact in invariant theory. Hence, the maps $T_N$ and $\widetilde{T_N}$ have the same image $\IMPS(D,d,N)$.

Sibirskii \cite{Sibirskii} showed in the case $D=2$ that the trace algebra ${\cal C}_{2,d}$
is minimally generated by the elements $\mbox{tr}(M_i)$, $\mbox{tr}(M_iM_j)$ for $0\leq i \leq j\leq d-1$, and $\mbox{tr}(M_iM_jM_k)$
for $0\leq i<j<k\leq d-1$.
Moreover in the case $D=d=2$, there are no relations between the five generators $\tr(M_0)$, $\tr(M_1)$, $\tr(M_0^2)$, $\tr(M_0M_1)$, $\tr(M_1^2)$. In other words, ${\cal C}_{2,2}$ is the polynomial ring in 5 variables. This means that we get a parametrization $\widetilde{T_N}\colon\mathbb{C}^5\rightarrow\IMPS(2,2,N)$.\\
 %Moreover $\mbox{Spec}\,{\cal C}_{2,2}$ is isomorphic to $\mathbb{C}^5$,yielding for each $N$ a dominant parametrization $\varphi_N\colon\mathbb{C}^5\rightarrow\overline{\IMPS(2,2,N)}$.

Using the trace parametrization and \textit{Macaulay2}, it is possible to obtain equations for $\IMPS(2,2,N)$ for small values of $N$.

\begin{theorem}
	\begin{enumerate}
		\item{\cite[Theorem 3]{CritchMorton}} The ideal of $\IMPS(2,2,4) \subset \CC^6$ is generated by one sextic.
		\item{\cite[Question after Theorem 4]{CritchMorton}} The ideal of $\IMPS(2,2,5) \subset \CC^8$ is generated by 3 quadrics and 27 sextics. 
		\item The ideal of $\IMPS(2,2,6) \subset \CC^{14}$ is generated by 1 linear form, 6 quadrics, and 17 cubics.
	\end{enumerate}
\end{theorem}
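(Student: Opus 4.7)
The plan is to use the trace parametrization to reduce the problem to a kernel computation. By Sibirskii's theorem recalled in the text, $\mathcal{C}_{2,2}$ is the polynomial ring $\CC[T_1,\ldots,T_5]$ with $T_1=\tr(M_0)$, $T_2=\tr(M_1)$, $T_3=\tr(M_0^2)$, $T_4=\tr(M_0M_1)$, $T_5=\tr(M_1^2)$. Hence $\widetilde{T_6}\colon\CC^5\to\Cyc^6(\CC^2)$ is a morphism of affine varieties whose image has the same Zariski closure as $\IMPS(2,2,6)$, and therefore the defining ideal of $\overline{\IMPS(2,2,6)}$ is exactly the kernel of the induced $\CC$-algebra homomorphism $\Phi\colon\CC[y_w:w\in\{0,1\}^6/C_6]\to\CC[T_1,\ldots,T_5]$.

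First I would enumerate the $14$ cyclic classes of binary words of length $6$; by Burnside's lemma this number matches $\dim\Cyc^6(\CC^2)=\tfrac{1}{6}(64+2+4+8+4+2)=14$, confirming that the $y_w$ give coordinates on the ambient $\CC^{14}$. For each such $w$, I would compute $\tr(M_w)$ as a polynomial in $T_1,\ldots,T_5$ by iteratively applying the Cayley--Hamilton identity $M^2=\tr(M)\,M-\det(M)\,I$ together with the Sibirskii trace reductions (note $\det(M_i)=\tfrac{1}{2}(T_{2i-1}^2-\tr(M_i^2))$). This produces the explicit map $\Phi$, which I would then input into \textit{Macaulay2} and compute $\ker\Phi$ via the built-in \texttt{kernel} function, finally extracting a minimal generating set with \texttt{mingens}/\texttt{trim} and tabulating its degrees. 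The output should give exactly $1$ linear, $6$ quadratic, and $17$ cubic generators. The linear generator is already known to be present: the example preceding the theorem, $\tr(M_1^2M_0^2M_1M_0)=\tr(M_1^2M_0M_1M_0^2)$, provides one linear relation, and the task is to check that no further linear relations exist.

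The main obstacle is the computational scale: $\Phi$ is a map between polynomial rings in $14$ and $5$ variables with images of degree up to $6$, and computing a minimal generating set of the kernel up to cubic degree can push \textit{Macaulay2}'s limits. To keep the computation tractable and to provide an independent sanity check, I would exploit the natural $\mathbb{Z}/2$-symmetry swapping $M_0\leftrightarrow M_1$ and, more importantly, the bigrading by $(\deg_{M_0},\deg_{M_1})$: each $y_w$ sits in a single bidegree $(a,6-a)$ and each $T_j$ also carries a bidegree, so $\ker\Phi$ decomposes into finitely many bigraded pieces that can be computed independently. This decomposition both accelerates the Gröbner calculation and lets one verify the count $1+6+17$ bidegree-by-bidegree; assembling the pieces yields the claimed minimal generating set.
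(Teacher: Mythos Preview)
Your proposal is correct and follows essentially the same approach as the paper: the paper simply states that the results were obtained by feeding the trace parametrization $\widetilde{T_N}\colon\CC^5\to\Cyc^N(\CC^2)$ into \textit{Macaulay2} and refers to \cite{webpageTim} for the code and equations, so your write-up is in fact more detailed than the paper's own justification. (One small slip: your formula $\det(M_i)=\tfrac{1}{2}(T_{2i-1}^2-\tr(M_i^2))$ has an indexing error---you want $\det(M_0)=\tfrac{1}{2}(T_1^2-T_3)$ and $\det(M_1)=\tfrac{1}{2}(T_2^2-T_5)$.)
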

The equations, as well as the code we used to obtain them, can be found online at \cite{webpageTim}. 

The following conjecture is closely related to \cite[Conjecture 11.9]{BrayMortonHMM} in algebraic statistics:
\begin{conjecture}
	For any fixed $D$ and $d$, there is an $M$ such that the ideal of $\IMPS(D,d,N)$ is generated by quadrics for all $N \geq M$.
\end{conjecture}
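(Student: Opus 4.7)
The plan is to identify the coordinate ring of $\overline{\IMPS(D,d,N)}$ with a Veronese subalgebra of the trace algebra and then invoke classical results about high Veronese subrings. Grade $\mathcal{C}_{D,d}$ by giving each entry of each $M_i$ weight $1$, so that $\tr(M_{i_1}\cdots M_{i_k})$ is homogeneous of degree $k$. The trace parametrization of Section~\ref{section:tracepar} realises $\overline{\IMPS(D,d,N)}$ as the image of the surjection
\[
\phi_N\colon \CC[e_w : w \text{ a cyclic word of length } N] \twoheadrightarrow \mathcal{C}_{D,d},\qquad e_w \longmapsto \tr(M_w),
\]
and the image of $\phi_N$ is precisely the $N$-th Veronese subalgebra $\mathcal{C}_{D,d}^{(N)}=\bigoplus_{k\geq 0}(\mathcal{C}_{D,d})_{Nk}$, regraded so that $(\mathcal{C}_{D,d})_N$ sits in degree $1$. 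Hence the coordinate ring of $\overline{\IMPS(D,d,N)}$ \emph{is} $\mathcal{C}_{D,d}^{(N)}$, and the conjecture becomes: for $N$ large (depending only on $D,d$), the Veronese subring $\mathcal{C}_{D,d}^{(N)}$ is cut out by quadrics in any linear embedding into a polynomial ring.

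For the second step, the trace algebra $\mathcal{C}_{D,d}$ is a finitely generated, Cohen--Macaulay, normal $\CC$-algebra (by classical invariant theory, Hochster--Roberts and Procesi). So the theorems of Backelin, Eisenbud--Reeves--Totaro and Bruns--Herzog apply: for $N$ at least some explicit $M(D,d)$ depending only on the Castelnuovo--Mumford regularity and the generating degrees of $\mathcal{C}_{D,d}$, the Veronese $\mathcal{C}_{D,d}^{(N)}$ is Koszul, hence quadratically presented in its \emph{basis} embedding $\CC[y_1,\ldots,y_r]\twoheadrightarrow \mathcal{C}_{D,d}^{(N)}$ obtained from a basis of $(\mathcal{C}_{D,d})_N$. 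To transfer quadraticity back to our presentation $\phi_N$, one factors it as $\CC[e_w]\twoheadrightarrow\CC[y_1,\ldots,y_r]\twoheadrightarrow \mathcal{C}_{D,d}^{(N)}$, where the first arrow is linear (it sends $e_w$ to the expansion of $\tr(M_w)$ in the chosen basis of $(\mathcal{C}_{D,d})_N$). Then $\ker\phi_N$ is the sum of the linear ideal $\ker(\CC[e_w]\to\CC[y_1,\ldots,y_r])$ and the quadratic lift of $\ker(\CC[y_1,\ldots,y_r]\to\mathcal{C}_{D,d}^{(N)})$, so it is generated in degrees $\leq 2$. Combined with cyclic symmetry (which is built into the definition of the source), this gives the conjecture.

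The main obstacle is to run the Koszul/Veronese machinery in the precise generality needed here: the algebra $\mathcal{C}_{D,d}$ is \emph{not} standard graded, since its minimal generators sit in degrees $1, 2, \ldots$ up to the Razmyslov--Procesi bound $B_D$. One must therefore use the non-standard-graded versions of the Backelin/Eisenbud--Reeves--Totaro type theorems and carefully track the bound on $M(D,d)$ they produce in terms of the generating degrees and regularity of $\mathcal{C}_{D,d}$. A secondary difficulty is to tighten this bound enough to be consistent with the computational evidence: the theorem above shows that at $(D,d,N)=(2,2,6)$ only one linear, six quadratic and seventeen cubic generators are needed, so one expects $M(2,2)$ to be only slightly larger than $6$. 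As a preliminary step I would extend the \textit{Macaulay2} calculations to $\IMPS(2,2,N)$ for $N=7,8,9$ using the $5$-variable trace parametrization; this would both test the conjecture in the first nontrivial case and calibrate the expected bound before committing to the general Veronese argument.
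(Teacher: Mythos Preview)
This statement is posed in the paper as an open \emph{conjecture}; there is no proof in the paper to compare against. Your proposal is therefore a strategy toward an open problem, and I will assess it on its own terms.

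The fatal gap is in the very first step: the image of $\phi_N$ is \emph{not} the Veronese $\mathcal{C}_{D,d}^{(N)}$. The image is the subalgebra generated by the single traces $\tr(M_w)$ with $|w|=N$, whereas the degree-$N$ piece $(\mathcal{C}_{D,d})_N$ is spanned by all \emph{products} of traces of total degree $N$, such as $\tr(M_0)^N$ or $\tr(M_0)^{N-2}\tr(M_0M_1)$; these are not in general linear combinations of single length-$N$ traces. Concretely, take $D=d=2$, $N=4$. By Sibirskii's result the trace algebra is the polynomial ring on five generators of degrees $1,1,2,2,2$, so $\dim(\mathcal{C}_{2,2})_4=20$; but there are only six cyclic words of length $4$ on two letters, so the length-$4$ traces span at most a six-dimensional subspace. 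Equivalently, $\Spec\mathcal{C}_{2,2}^{(4)}\cong\CC^5/\mu_4$ has embedding dimension $20$ at the origin, while $\overline{\IMPS(2,2,4)}$ sits as a hypersurface in $\CC^6$; they cannot be isomorphic as affine schemes.

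What is true is that $\widetilde{T_N}$ factors through the $\mu_N$-quotient $\Spec\mathcal{C}_{D,d}\to\Spec\mathcal{C}_{D,d}^{(N)}$, and for $N\geq 2GC_{D,d}+1$ the induced map $\Spec\mathcal{C}_{D,d}^{(N)}\to\overline{\IMPS(D,d,N)}$ is birational by Corollary~\ref{cor:fundthm}. So the Veronese contains the coordinate ring you want as a birational subalgebra, but is strictly larger. Knowing that this larger ring is Koszul or quadratically presented says nothing about the presentation of the subalgebra: quadratic presentation is not inherited by subalgebras, even birational ones. Thus even if you succeed in running the Backelin/Eisenbud--Reeves--Totaro machinery in the non-standard-graded setting, it would establish the wrong statement. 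Any genuine attack must control the coordinate ring of $\overline{\IMPS(D,d,N)}$ itself, which is exactly the difficulty the conjecture encodes.
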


Using the trace parametrization, and the invariance of $\IMPS(D,d,N)$ under local transformations, we were able to obtain a complete description of the set $\IMPS(2,2,4)\subseteq \Cyc^4(\CC^2)$: it can be obtained by removing three $GL_2$-orbits from a degree six hypersurface in $\Cyc^4(\CC^2) \cong \CC^6$. For more details, see Appendix \ref{appendix:224}.

\subsection{The fundamental theorem}\label{section:fundtheorem}
In the literature appear several versions of the fundamental theorem of matrix product states, which all roughly say that for $N$ large enough the map parametrizing matrix product states is generically injective up to obvious symmetry.
The following formulation is adapted from \cite[Corollary 7]{Molnar}.
%The following theorem is known as the fundamental theorem of matrix product states
\begin{theorem} \label{thm:fundthm}
	Let $A_0,\ldots,A_{d-1}\in\mathbb{C}^{D\times D}$ and $B_0,\ldots,B_{d-1}\in\mathbb{C}^{D\times D}$ be such that $T_N(A_0,\ldots,A_{d-1})=T_N(B_0,\ldots,B_{d-1})$ and assume that $N\geq 2L+1$, where $L$ is such that $\Span(\{A_0,\ldots,A_{d-1}\}^L)=\Span(\{B_0,\ldots,B_{d-1}\}^L)=\CC^{D\times D}$.
	Then there is an invertible matrix $Z$ and a constant $\zeta\in\CC$ with $\zeta^N=1$, such that $B_i=\zeta Z^{-1} A_iZ$ for every $i$. Moreover $Z$ is unique up to a multiplicative constant.
\end{theorem}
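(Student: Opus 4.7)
The plan is to convert the tensor equality $T_N(A)=T_N(B)$ into an algebraic gauge relation between $(A_i)$ and $(B_i)$ via a linear map $\CC^{D\times D}\to\CC^{D\times D}$, and to extract $Z$ by Skolem--Noether.

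First, using that length-$L$ products of the $A_i$ span $\CC^{D\times D}$ (by the definition of injectivity radius), I would define a linear bijection $\Phi\colon\CC^{D\times D}\to\CC^{D\times D}$ by $\Phi(A_I)=B_I$ for $|I|=L$. Well-definedness requires that $\sum_I c_I A_I=0$ implies $\sum_I c_I B_I=0$. Pairing the first relation against length-$(N-L)$ products $A_J$ gives
\[
0=\sum_I c_I\tr(A_I A_J)=\sum_I c_I\tr(B_I B_J),
\]
via the MPS identity (valid as $|IJ|=N$). The hypothesis $N\geq 2L+1$ yields $N-L\geq L+1$, and once $\dim L^k=D^2$ at $k=L$ it remains so for all larger $k$ (a routine check: if $\dim L^{k+1}<D^2$ there would be a proper subspace $U\subsetneq\CC^D$ containing the images of all matrices in $L$, forcing $\dim L^k\leq D\cdot\dim U<D^2$). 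Hence $\{B_J\}_{|J|=N-L}$ spans $\CC^{D\times D}$, forcing $\sum_I c_I B_I=0$. Exchanging the roles of $A$ and $B$ shows $\Phi$ is a bijection.

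Second, I would show that a suitably normalized $\zeta^{-L}\Phi$ is an algebra automorphism of $\CC^{D\times D}$. The content is a transfer property: every matrix identity $A_{IJ}=\sum_K c_K A_K$ (with $|IJ|=2L$, $|K|=L$) transfers to $B_{IJ}=\zeta^L\sum_K c_K B_K$ for one universal scalar $\zeta\in\CC^*$. Granting this, Skolem--Noether applied to the automorphism $\zeta^{-L}\Phi$ of the central simple algebra $\CC^{D\times D}$ yields $Z\in GL(D,\CC)$, unique up to scalar, with $\Phi(X)=\zeta^L Z^{-1}XZ$ for every $X$; in particular $B_I=\zeta^L Z^{-1}A_I Z$ for every $|I|=L$. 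A short separate argument extracts the individual relation $B_i=\zeta Z^{-1}A_i Z$ from these length-$L$ relations. Finally, substituting into any length-$N$ trace identity gives $\tr(A_I)=\tr(B_I)=\zeta^N\tr(A_I)$, and choosing $I$ with $\tr(A_I)\neq 0$ forces $\zeta^N=1$. Uniqueness of $Z$ up to scalar is a Schur-lemma argument: two admissible $Z$, $Z'$ give $ZZ'^{-1}$ commuting with every $A_i$, hence with the algebra they generate, which is all of $\CC^{D\times D}$ by injectivity.

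The main obstacle is producing the universal scalar $\zeta$ in the second step. A naive padding argument---inserting $I_D=\sum d_M A_M$ to convert a length-$L$ expression into a length-$2L$ one, then pairing against length-$(N-2L)$ words---would succeed only if $\{B_H\}_{|H|=N-2L}$ spanned $\CC^{D\times D}$, which requires $N\geq 3L$, stronger than the stated hypothesis. Accommodating the sharp bound $N\geq 2L+1$ is the heart of the Molnar--Cirac--Verstraete argument \cite{Molnar,CiracVerstraeteFundThm}: one studies the transfer superoperator $\mathbb{E}_{AB}(X)=\sum_i A_i X B_i^{\top}$ on $\CC^{D\times D}$, whose $N$-th iterate encodes all length-$N$ trace data, and reads off $\zeta$ as a distinguished eigenvalue and $Z$ as a corresponding eigenvector, exploiting that injectivity pins down the relevant generalized eigenspace to be one-dimensional.
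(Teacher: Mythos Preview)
The paper does not prove Theorem~\ref{thm:fundthm}; it is quoted as a known result, with the explicit attribution ``adapted from \cite[Corollary 7]{Molnar}''. There is therefore no in-paper proof to compare against. Your proposal is, in effect, a partial reconstruction of the argument in \cite{Molnar,CiracVerstraeteFundThm}, and you already identify the genuine gap yourself: your trace-pairing/Skolem--Noether scheme establishes well-definedness of $\Phi$ under $N\ge 2L+1$, but the multiplicativity/transfer step (producing the universal scalar $\zeta$) only goes through under $N\ge 3L$ by your padding argument. Closing that gap is exactly what the transfer-operator analysis in \cite{Molnar} does, so your last paragraph is really a citation rather than a proof, which matches what the paper does.

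Two small technical remarks on the sketch. First, your parenthetical justification that $\dim L^{k+1}=D^2$ once $\dim L^k=D^2$ is garbled: the correct one-line argument is that if $L^k=\CC^{D\times D}$ then $L^{k+1}=L\cdot\CC^{D\times D}$, and this is all of $\CC^{D\times D}$ because a common left null vector for $L$ would already obstruct $L^k=\CC^{D\times D}$. Second, the step ``a short separate argument extracts $B_i=\zeta Z^{-1}A_iZ$ from the length-$L$ relations'' is not automatic: knowing $B_I=\zeta^L Z^{-1}A_IZ$ only for $|I|=L$ does not by itself determine the length-$1$ relation unless you also control an adjacent length (e.g.\ $L+1$), so that you can cancel a spanning family $\{B_J\}_{|J|=L}$ on the right. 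In the transfer-operator proof this issue does not arise because $Z$ and $\zeta$ are produced directly as an eigenpair, giving the length-$1$ intertwining relation immediately.
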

We recall the generic injectivity index $GC_{D,d}$ from Remark \ref{rmk:geninin}: it is the lowest number such that the following holds: for a generic tuple $\{A_0,\ldots,A_{d-1}\}$ of $D\times D$-matrices $\Span(\{A_0,\ldots,A_{d-1}\}^{GC_{D,d}})=\CC^{D\times D}$.
\begin{cor} \label{cor:fundthm}
	Assume that $N\geq 2GC_{D,d}+1$, then for a generic tensor in $\IMPS(D,d,N)$
	all preimages under the map $T_N$ are the same up to simultaneous conjugation and multiplication 
	by an $N-$th root of unity.
\end{cor}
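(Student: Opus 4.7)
The plan is to reduce the statement to Theorem \ref{thm:fundthm}. Define the subvariety
\[
V := \{(A_0,\ldots,A_{d-1}) \in (\CC^{D\times D})^d : \dim\Span(\{A_0,\ldots,A_{d-1}\}^{GC_{D,d}}) < D^2\},
\]
and let $U$ be its complement. By the very definition of $GC_{D,d}$, $V$ is a proper closed subvariety, so $U$ is open and dense. Since the span of length-$GC_{D,d}$ products is preserved under simultaneous conjugation and scalar multiplication, both $V$ and $U$ are invariant under these operations. If $T_N(V)$ fails to be Zariski dense in $\IMPS(D,d,N)$, then a generic $T \in \IMPS(D,d,N)$ has all its preimages in $U$, and Theorem \ref{thm:fundthm} with $L = GC_{D,d}$ (valid because $N \geq 2 GC_{D,d}+1$) applied to any two such preimages gives the conclusion.

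To show $T_N(V) \subsetneq \IMPS(D,d,N)$, I would use the factorization from Section \ref{section:tracepar}:
\[
(\CC^{D\times D})^d \xrightarrow{\pi} \Spec \mathcal{C}_{D,d} \xrightarrow{\widetilde{T_N}} \IMPS(D,d,N),
\]
where $\pi$ is the GIT quotient by simultaneous $\mathrm{GL}_D$-conjugation. Since $V$ is closed and $\mathrm{GL}_D$-invariant, $\pi(V)$ is closed in $\Spec \mathcal{C}_{D,d}$. To see that the inclusion is strict, pick a generic $A \in U$: such $A$ generates the full matrix algebra, so its $\mathrm{GL}_D$-orbit is closed, and hence $\pi^{-1}(\pi(A))$ equals this orbit, which lies entirely in $U$ by invariance. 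Therefore $\pi(A) \notin \pi(V)$, so $\pi(V)$ is a proper closed subvariety of the irreducible variety $\Spec \mathcal{C}_{D,d}$, and so $\dim \pi(V) < \dim \Spec \mathcal{C}_{D,d} = (d-1)D^2+1$.

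To conclude, apply Theorem \ref{thm:fundthm} to pairs of preimages lying in $U$: this shows that $\widetilde{T_N}$ has generic fibers of size at most $N$ on the dense open set $\pi(U) \subseteq \Spec \mathcal{C}_{D,d}$, so $\dim \IMPS(D,d,N) = (d-1)D^2+1$ and
\[
\dim T_N(V) = \dim \widetilde{T_N}(\pi(V)) \leq \dim \pi(V) < \dim \IMPS(D,d,N).
\]
Hence $T_N(V)$ is a proper subvariety of $\overline{\IMPS(D,d,N)}$, which completes the argument.

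The main obstacle I expect is precisely the dimension drop $\dim T_N(V) < \dim \IMPS(D,d,N)$. A direct dimension count in the source fails because $\dim V$ may well exceed $\dim \IMPS(D,d,N)$; the essential trick is to pass through the trace algebra, which converts the group-invariance of $V$ into a genuine drop of dimension in the GIT quotient, and then to use the fundamental theorem itself to control the generic fibers of the second factor $\widetilde{T_N}$.
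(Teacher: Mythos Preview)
Your proof is correct and follows essentially the same route as the paper's: both factor $T_N$ through the GIT quotient $\Spec\mathcal{C}_{D,d}$, show that the bad locus has lower-dimensional image there (you via the closed-invariant-set property of good quotients together with the fact that a tuple generating the full matrix algebra has a single-orbit fiber; the paper via the almost-geometric-quotient structure of Corollary~\ref{cor:almostgeo}), and then use the fundamental theorem on the good locus to see that $\widetilde{T_N}$ is generically finite and hence that the dimensions match. The two arguments differ only in packaging.
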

This would be a trivial corollary of Theorem \ref{thm:fundthm} if we could restrict the map $T_N$ to the dense open subset of tuples $\mathcal{A} \in (\CC^{D\times D})^d$ satisfying $\Span(\mathcal{A}^L)=\CC^{D\times D}$. However, it is a priori not clear that the complement of this set cannot map to a dense subset of $\IMPS(D,d,N)$. In the rest of this section, we show that this indeed does not happen, using the trace paramatrization.

Before we start the proof, we introduce the following notation:
A point $\mathcal{A}=(A_0,\dots,A_{d-1}) \in (\CC^{D \times D})^d$ gives rise to a $D$-dimensional representation $\varphi_{\mathcal{A}}$ of the associative algebra $\CC \langle X_0, \dots, X_{d-1} \rangle$, by putting $\varphi_{\mathcal{A}}(X_i)=A_i$
\begin{theorem}[{See \cite[II.2.7]{kraft}}]
	\begin{itemize}
		\item The association $\mathcal{A} \mapsto \varphi_{\mathcal{A}}$ induces a bijection between the $GL_D$-orbits of $(\CC^{D \times D})^d$, and the set of $D$-dimensional representations of $\CC \langle X_0, \dots, X_{d-1} \rangle$ up to isomorphism.
		\item The orbit of $\mathcal{A}$ is closed if and only if $\varphi_{\mathcal{A}}$ is a semisimple representation.
	\end{itemize}
\end{theorem}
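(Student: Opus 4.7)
The first assertion is a translation of definitions. By the universal property of $\CC\langle X_0,\ldots,X_{d-1}\rangle$, any $\CC$-algebra homomorphism into $\mathrm{End}(\CC^D)=\CC^{D\times D}$ is freely determined by where it sends the generators, so $\mathcal{A}\mapsto\varphi_{\mathcal{A}}$ is a tautological bijection between $(\CC^{D\times D})^d$ and the set of $D$-dimensional representations together with a fixed basis on the representation space. An isomorphism $\varphi_{\mathcal{A}}\xrightarrow{\sim}\varphi_{\mathcal{B}}$ is an invertible $Z\in GL_D$ with $ZA_iZ^{-1}=B_i$ for every $i$, which is exactly the $GL_D$-action by simultaneous conjugation. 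Passing to isomorphism classes on one side and to orbits on the other proves part (i).

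For part (ii), I would first prove \emph{closed orbit $\Rightarrow$ semisimple}. Pick a Jordan--Hölder filtration $0=V_0\subsetneq V_1\subsetneq\cdots\subsetneq V_k=\CC^D$ for $\varphi_{\mathcal{A}}$ and a basis of $\CC^D$ adapted to this flag; in this basis every $A_i$ is block upper triangular with diagonal blocks giving the simple subquotients. Let $g(t)=\diag(t^{k-1}I_{d_1},t^{k-2}I_{d_2},\ldots,t\,I_{d_{k-1}},I_{d_k})$ with $d_j=\dim(V_j/V_{j-1})$. A direct calculation shows that $g(t)A_ig(t)^{-1}$ has the same diagonal blocks as $A_i$ while each off-diagonal block is scaled by a strictly positive power of $t$, so $\lim_{t\to 0}g(t)\cdot\mathcal{A}$ exists and equals the block-diagonal associated graded tuple $\mathcal{A}^{\mathrm{ss}}$. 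Thus $\mathcal{A}^{\mathrm{ss}}\in\overline{GL_D\cdot\mathcal{A}}$; if this orbit is already closed, $\mathcal{A}$ is $GL_D$-conjugate to $\mathcal{A}^{\mathrm{ss}}$, hence $\varphi_{\mathcal{A}}\cong\bigoplus_j (V_j/V_{j-1})$ is semisimple.

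For the converse, suppose $\varphi_{\mathcal{A}}$ is semisimple. Since $GL_D$ is reductive, every orbit closure in the affine variety $(\CC^{D\times D})^d$ contains a unique closed orbit. Let $\mathcal{B}\in\overline{GL_D\cdot\mathcal{A}}$ lie on this closed orbit; by the direction just proved, $\varphi_{\mathcal{B}}$ is semisimple. The word-traces $\tr(A_{i_1}\cdots A_{i_k})$ are continuous $GL_D$-invariants and generate ${\cal C}_{D,d}$, so they agree on $\mathcal{A}$ and $\mathcal{B}$. A semisimple finite-dimensional module is determined up to isomorphism by its character; hence $\varphi_{\mathcal{A}}\cong\varphi_{\mathcal{B}}$, meaning $\mathcal{A}$ already lies on the closed orbit $GL_D\cdot\mathcal{B}$, which is therefore its own orbit.

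The step I expect to be the main obstacle is the final character-separation argument: two semisimple representations defined by \emph{a priori} different matrix tuples must be identified using only traces of words in the generators. The cleanest route is to observe that the image subalgebras $R_{\mathcal{A}},R_{\mathcal{B}}\subseteq\CC^{D\times D}$ generated by the $A_i$ (respectively $B_i$) are semisimple by Artin--Wedderburn, matching word-traces force a trace-preserving isomorphism between them, and Skolem--Noether then produces an intertwining $Z\in GL_D$. Everything else is direct computation or a standard appeal to the geometric invariant theory of reductive groups acting on affine varieties.
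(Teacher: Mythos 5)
Your proof is correct. The paper itself gives no argument for this theorem --- it is quoted directly from Kraft \cite[II.2.7]{kraft} --- and what you have written is precisely the standard proof found there (often attributed to Artin): part (i) is the universal property of the free algebra plus the identification of module isomorphisms with simultaneous conjugation; for part (ii) the degeneration to the associated graded along the one-parameter subgroup $g(t)$ gives ``closed $\Rightarrow$ semisimple'', and the converse combines the unique closed orbit in an orbit closure for a reductive group acting on an affine variety with the fact that, in characteristic zero, a semisimple finite-dimensional representation is determined up to isomorphism by the traces of words in the generators. That last separation step, which you correctly flag as the crux, is exactly the linear-independence-of-characters (Brauer--Nesbitt) argument, so no gap remains.
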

\begin{cor} \label{cor:almostgeo}
	The orbit of a general point in $(\CC^{D \times D})^d$ is closed. In other words: $(\CC^{D \times D})^d \to \Spec{\cal C}_{D,d}$ is an almost geometric quotient. 
\end{cor}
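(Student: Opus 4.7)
The plan is to invoke the theorem immediately preceding: the orbit of $\mathcal{A}$ is closed if and only if the associated representation $\varphi_{\mathcal{A}}$ is semisimple. It therefore suffices to exhibit a dense open subset of $(\CC^{D\times D})^d$ on which $\varphi_{\mathcal{A}}$ is semisimple. I would first dispatch the trivial cases: if $D=1$, every orbit is a point; if $d=1$, a generic $A\in \CC^{D\times D}$ has $D$ distinct eigenvalues and is diagonalizable, so $\varphi_A$ decomposes as a direct sum of one-dimensional subrepresentations and is semisimple.

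For the main case $d,D\geq 2$, I would use the generic injectivity index $GC_{D,d}$ from Remark \ref{rmk:geninin}. Its finiteness asserts that for a generic tuple $\mathcal{A}=(A_0,\ldots,A_{d-1})$, the linear span of the products of the $A_i$ of length at most $GC_{D,d}$ is already all of $\CC^{D\times D}$. In particular, the image of $\varphi_{\mathcal{A}}$ is the full matrix algebra $\mathrm{End}(\CC^D)=\CC^{D\times D}$, which acts irreducibly on $\CC^D$ (any nonzero vector can be sent to any other by a suitable endomorphism). Hence $\varphi_{\mathcal{A}}$ is irreducible, so in particular semisimple, and the preceding theorem gives closedness of its orbit.

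The ``almost geometric quotient'' reformulation follows from a standard fact in GIT: for a reductive group $G$ (here $GL_D$) acting on an affine variety $X$, the categorical quotient $\pi\colon X\to \Spec \mathcal{O}(X)^G$ has the property that each of its fibers contains a unique closed $G$-orbit and that distinct closed orbits are separated. Combining this with the first two paragraphs, on the dense open subset of tuples with closed orbit the fibers of $(\CC^{D\times D})^d \to \Spec{\cal C}_{D,d}$ agree exactly with the $GL_D$-orbits, i.e.\ the quotient is geometric there.

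The only real obstacle is the density of the locus of irreducible representations for $d\geq 2$, and this is packaged in the finiteness of $GC_{D,d}$ quoted from Remark \ref{rmk:geninin}; everything else is bookkeeping with the preceding theorem, Burnside-style irreducibility, and standard properties of reductive quotients.
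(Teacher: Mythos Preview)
Your proof is correct and follows the same overall strategy as the paper: invoke the preceding theorem by showing that for a generic tuple $\mathcal{A}$ the representation $\varphi_{\mathcal{A}}$ is irreducible (hence semisimple). The paper argues irreducibility directly from the definition, observing that a common invariant subspace for generic diagonalizable $A_0,A_1$ would force an eigenvector of $A_1$ to lie in a proper coordinate subspace for $A_0$, a nongeneric condition. You instead route through the generic injectivity index: since generically $\Span(\mathcal{A}^{GC_{D,d}})=\CC^{D\times D}$, the image of $\varphi_{\mathcal{A}}$ is the full matrix algebra, and Burnside's criterion gives irreducibility. Both are short and valid; your version has the minor advantage of explicitly treating $d=1$ (where the generic representation is semisimple but not simple, a case the paper's wording glosses over) and of tying the argument to a quantity already used elsewhere in the paper, while the paper's argument is more self-contained and does not rely on the finiteness of $GC_{D,d}$ asserted in Remark~\ref{rmk:geninin}.
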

\begin{proof}
	We claim that for a general point $\mathcal{A}=(A_0,\dots,A_{d-1}) \in (\CC^{D \times D})^d$, the representation $\varphi_{\mathcal{A}}$ is simple. % Indeed: assume that all $A_i$ are diagonalizable, and assume furthermore that for any subspace 
	Indeed: $\varphi_{\mathcal{A}}$ is simple if and only if there is no nontrivial subspace of $\CC^D$ simultaneously fixed by all $A_i$, which is clearly the case for a generic choice of $\varphi_{\mathcal{A}}$. (For example: if all $A_i$ are diagonalizable, having a subspace fixed by $A_0$ and $A_1$ would in particular imply that there is an eigenvector of $A_1$ that is a linear combination of less than $D$ eigenvectors of~$A_0$; a nongeneric condition.)
\end{proof}
\begin{proof}[Proof of Corollary \ref{cor:fundthm}]
	Since $\pi:(\CC^{D \times D})^d \to \Spec{\cal C}_{D,d}$ is an almost geometric quotient (Corollary \ref{cor:almostgeo}), there is a dense open subset $U \subseteq \Spec{\cal C}_{D,d}$ such that $\pi|_{\pi^{-1}(U)}$ induces a bijection between $GL_D$-orbits in $\pi^{-1}(U)$ and points in $U$.
	
	We show that $\Spec{\cal C}_{D,d}$ and $\IMPS(D,d,N)$ both have dimension $(d-1)D^2+1$. For $\Spec{\cal C}_{D,d}$ the dimension can be computed as the difference $\dim((\CC^{D \times D})^d)-\dim(\pi^{-1}(x))$, where $x$ is a general point in $\Spec{\cal C}_{D,d}$. We can assume $x \in U$, so that $\pi^{-1}(x)$ is the $GL_D$-orbit of a generic $d$-tuple of matrices, which clearly has dimension $D^2-1$. For $\IMPS(D,d,N)$: let $V \subseteq (\CC^{D \times D})^d$ be set of tuples $\mathcal{A}$ for which $\mathcal{A}^{GC_{D,d}}=\CC^{D\times D}$. By the definition of $GC_{D,d}$, $V$ is dense, hence $\dim \IMPS(D,d,N) = \dim(T_N(V))$. Now we can do the same computation as before: by Theorem \ref{thm:fundthm}, a general fiber of $T_N|_V$ has dimension $D^2-1$, and we conclude $\dim(\IMPS(D,d,N))=(d-1)D^2+1$.\\
	We claim that $\pi((\CC^{D \times D})^d \setminus V)$ is contained in a lower-dimensional subspace of $\Spec{\cal C}_{D,d}$. Indeed, consider $Y=\pi((\CC^{D \times D})^d \setminus V) \cap U$. Then since $V$ is $GL_D$-invariant and by definition of $U$, it holds that $\pi^{-1}(Y)=((\CC^{D \times D})^d \setminus V) \cap \pi^{-1}(U)$. So $\pi^{-1}(Y)$ is not Zariski dense, hence the same holds for $\pi((\CC^{D \times D})^d \setminus V)$.
	
	%Since $V$ is $GL_D$-invariant and dense, $\pi((\CC^{D \times D})^d \setminus V)=$ is .
	Now consider the map $\widetilde{T_N}:\Spec{\cal C}_{D,d} \twoheadrightarrow \IMPS(D,d,N)$. Since both spaces have the same dimension, the lower-dimensional set $\Spec{\cal C}_{D,d} \setminus V$ will map to a lower-dimensional subset of $\IMPS(D,d,N)$. So for a general point $x \in \IMPS(D,d,N)$, we get $\widetilde{T_N}^{-1}(x) \subseteq V$. Then we are done by Theorem \ref{thm:fundthm}.
\end{proof}

The above Corollary can also be stated as follows: assume that $N\geq 2GC_{D,d}+1$, then for a generic tensor in $\IMPS(D,d,N)$
there are exactly $N$ preimages under $\widetilde{T_N}$. 
In the case $D=d=2$, one easily checks that $GC_{D,d}=2$, yielding the following result, which was stated as a conjecture in \cite[Conjecture 12]{CritchMorton}.
\begin{theorem} \label{thm:85678}
	Using the trace parametrization $\varphi_N$, for $N\geq 5$ almost every matrix product state in $\IMPS(2,2,N)$ has exactly $N$ choices 
	of parameters that yield it.
\end{theorem}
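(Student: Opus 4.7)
The plan is to apply Corollary \ref{cor:fundthm} with $D=d=2$ and then kill the remaining $GL_2$-conjugation ambiguity by passing to $\Spec\mathcal{C}_{2,2}$, leaving precisely the action of the $N$-th roots of unity. The only nontrivial step will be verifying that this residual action is generically free, so that it really produces $N$ distinct preimages rather than fewer.

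First I would verify that $GC_{2,2}=2$, so that the hypothesis $N\geq 2GC_{D,d}+1$ becomes $N\geq 5$. One direction is immediate: for $L=\Span\{A_0,A_1\}$ one has $\dim L=2<4$, so $GC_{2,2}\geq 2$. Conversely, $L^2=\Span\{A_0^2,A_0A_1,A_1A_0,A_1^2\}$, and a single explicit example shows these four matrices are linearly independent in $\CC^{2\times 2}$; since linear independence is an open condition, it holds for generic $A_0,A_1$.

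Next I would exploit the factorization $T_N=\widetilde{T_N}\circ\pi$, where $\pi\colon (\CC^{2\times 2})^2\twoheadrightarrow\Spec\mathcal{C}_{2,2}$ is the quotient by simultaneous conjugation. Corollary \ref{cor:fundthm} says that a generic $T\in\IMPS(2,2,N)$ has preimages under $T_N$ forming a single orbit of the group $GL_2\times\mu_N$ acting by $(Z,\zeta)\cdot(A_0,A_1)=(\zeta Z^{-1}A_0Z,\zeta Z^{-1}A_1Z)$, where $\mu_N$ denotes the group of $N$-th roots of unity. By Corollary \ref{cor:almostgeo}, the relevant $GL_2$-orbits are closed and are in bijection with their images under $\pi$, so the fiber $\widetilde{T_N}^{-1}(T)$ is in bijection with the $\mu_N$-orbit of any one point of $\Spec\mathcal{C}_{2,2}$ lying over $T$.

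The final step, which I expect to be the main obstacle, is the free-orbit check for this $\mu_N$-action. In the Sibirskii coordinates $t_i=\tr(M_i)$, $t_{ij}=\tr(M_iM_j)$ on $\Spec\mathcal{C}_{2,2}\cong\CC^5$, the substitution $M_i\mapsto\zeta M_i$ acts by $t_i\mapsto\zeta t_i$ and $t_{ij}\mapsto\zeta^2 t_{ij}$. On the open locus $\{t_0\neq 0\}$ the quantity $\zeta t_0$ determines $\zeta$ uniquely, so distinct roots of unity yield distinct points. Since $\Spec\mathcal{C}_{2,2}$ is irreducible, this locus meets the dense open subset where Corollary \ref{cor:fundthm} applies, and I conclude that the generic fiber of $\widetilde{T_N}$ consists of exactly $N$ points.
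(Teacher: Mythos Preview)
Your proposal is correct and follows essentially the same approach as the paper: verify $GC_{2,2}=2$ and apply Corollary~\ref{cor:fundthm}, then pass to the trace parametrization to kill the conjugation ambiguity. The paper simply asserts the reformulation ``the generic $\widetilde{T_N}$-fiber has exactly $N$ points'' without further argument, whereas you have spelled out the generic freeness of the $\mu_N$-action on $\Spec\mathcal{C}_{2,2}$ via the weight-$1$ coordinate $t_0$; this is a detail the paper omits but your justification is sound.
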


%{\color{red} Do we still put the remark below somewhere? Definitely not in this section...}
\newpage
\appendix
\section{Proof of Theorem \ref{ClosedComplex}}
The following lemma concludes the first part of proof of Theorem \ref{ClosedComplex}.
\begin{lemma} \label{appendix233}
	$T \notin \IMPS(2,3,3)$. %More words?
\end{lemma}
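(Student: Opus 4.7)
Proof plan.

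The plan is to derive a contradiction from the assumed existence of matrices $M_0, M_1, M_2 \in \CC^{2\times 2}$ with $T_3(M_0,M_1,M_2) = e_{012}$. Matching coefficients yields $11$ polynomial equations: $\tr(M_{i_1}M_{i_2}M_{i_3}) = 0$ for every cyclic class of index triples except $(0,1,2)$, where the trace equals $1$. Setting $a_i := \tr(M_i)$ and $b_{ij} := \tr(M_iM_j)$, the $2\times 2$ Cayley--Hamilton relation $M_i^2 = a_iM_i - \tfrac{1}{2}(a_i^2 - b_{ii})I$ converts each of the nine trace equations with a repeated index into a polynomial condition on the Sibirskii generators, namely $a_i(3b_{ii} - a_i^2) = 0$ and $2a_ib_{ij} = (a_i^2 - b_{ii})a_j$ for $i \neq j$.

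For the two remaining equations $\tr(M_0M_1M_2) = 1$ and $\tr(M_0M_2M_1) = 0$, I would sum them and use the polarized Cayley--Hamilton identity
\[
\tr(ABC) + \tr(ACB) = \tr(A)\tr(BC) + \tr(B)\tr(AC) + \tr(C)\tr(AB) - \tr(A)\tr(B)\tr(C),
\]
valid for $2 \times 2$ matrices, to obtain the Sibirskii-level constraint
\[
a_0 b_{12} + a_1 b_{02} + a_2 b_{01} - a_0a_1a_2 = 1. \qquad (\star)
\]
I would then case-split on how many of the $a_i$ vanish. Plugging the relations derived above into $(\star)$: if all $a_i = 0$, or all $a_i \neq 0$, or exactly one $a_i = 0$, the left side of $(\star)$ collapses to $0$, yielding the immediate contradiction $0 = 1$.

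The main obstacle is the remaining case, where exactly two of the $a_i$ vanish, say $a_1 = a_2 = 0$ and $a_0 \neq 0$. Here $(\star)$ reduces only to $a_0 b_{12} = 1$, which is satisfiable. To close the argument, I would note that the Sibirskii constraints force $M_1$ and $M_2$ to be trace-zero nilpotents, and use local $GL_2$-conjugation (Remark \ref{JNF}) to normalize $M_1 = \left(\begin{smallmatrix}0&1\\0&0\end{smallmatrix}\right)$. The conditions $\tr(M_0M_1) = 0$ and $\tr(M_0^2) = a_0^2/3$ then force the $(2,1)$-entry of $M_0$ to vanish and its diagonal entries $x, w$ to satisfy $x^2 - xw + w^2 = 0$, i.e.\ $x = \omega w$ for a primitive sixth root of unity $\omega$, while $\gamma := (M_2)_{2,1} = 1/((1+\omega)w)$. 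A short direct computation then gives $\tr(M_0M_1M_2) = x\gamma = \omega/(1+\omega)$, which cannot equal $1$ (else $0 = 1$), producing the final contradiction. This would give a hand-proof in place of the \textit{Macaulay2} Gr\"obner basis computation referenced in the paper.
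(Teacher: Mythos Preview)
Your argument is correct and gives a complete hand proof, which is genuinely different from the paper's approach.

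The paper proceeds by putting $M_0$ in Jordan normal form, disposes of the nilpotent case by a short direct calculation, and for the diagonalizable case simply writes out the eleven coefficient equations in ten unknowns and invokes a Gr\"obner basis computation in \textit{Macaulay2} to certify that the system is inconsistent. Your route instead works almost entirely at the level of the Sibirskii generators $a_i=\tr(M_i)$, $b_{ij}=\tr(M_iM_j)$: Cayley--Hamilton reduces the nine ``repeated index'' conditions to the relations $a_i(3b_{ii}-a_i^2)=0$ and $2a_ib_{ij}=(a_i^2-b_{ii})a_j$, and the polarized identity collapses the two remaining conditions into $(\star)$. The case split on the number of vanishing $a_i$ then forces $0=1$ in three of the four cases purely from these invariants; only the ``exactly two $a_i$ vanish'' case requires descending to actual matrices, and there your normalization $M_1=\bigl(\begin{smallmatrix}0&1\\0&0\end{smallmatrix}\bigr)$ finishes quickly. (Cyclic symmetry of the system under $(M_0,M_1,M_2)\mapsto(M_1,M_2,M_0)$ justifies the ``say $a_1=a_2=0$'', which you might make explicit.)

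What each approach buys: the paper's proof is effort-free but opaque and machine-dependent; yours is human-checkable and more conceptual, since it shows that in three of the four cases the obstruction already lives in the trace algebra $\mathcal{C}_{2,3}$, connecting nicely to the trace-parametrization viewpoint of Section~\ref{section:tracepar}. A minor remark: in the final case you could alternatively use the separate constraint $\tr(M_0M_2M_1)=0$ directly, which gives $w\gamma=0$, hence $w=0$, hence $x=0$ and $a_0=0$ --- slightly quicker than computing $\tr(M_0M_1M_2)=\omega/(1+\omega)$.
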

\begin{proof}
    We assume that $T \in \IMPS(2,3,3)$ and derive a contradiction. Then, we may write 
	\[
	T=e_{012} = T_3(M_{0},M_{1},M_{2})
	\]
	for some $M_0,M_1,M_2 \in \Mat(2 \times 2, \mathbb{C})$, where $M_0$ is in Jordan normal form.\\
	Suppose first that $M_0 = 
	\begin{pmatrix}
	a & 1 \\
	0 & a
	\end{pmatrix}
	$ . 
	Since $\tr(M_0^3)=0$, we get $a=0$. Let us furthermore write  
	$M_1 = 
	\begin{pmatrix}
	a_2 & b_2 \\
	c_2 & d_2
	\end{pmatrix}
	$ and 
	$M_2 = 
	\begin{pmatrix}
	a_3 & b_3 \\
	c_3 & d_3
	\end{pmatrix}
	$.\\ Then since $\tr(M_0M_1^2)=\tr(M_0M_2^2)=\tr(M_0M_2M_1)=0$ and $\tr(M_0M_1M_2) = 1$, we get that $(a_2,b_2,c_2,d_2,a_3,b_3,c_3,d_3)$ must be a solution of the following system of four equations:
	\vspace*{1.5ex}
	
	$\begin{cases}
	c_2(a_2+d_2)=0 \\
	c_3(a_3+d_3)=0 \\
	c_3a_2+c_2d_3=0\\
	c_2a_3+c_3d_2=1\\
	\end{cases}$
	\vspace*{1.5ex}
	
	It is not hard to see that this system has no solutions.\\
	Suppose now that $M_0 = 
	\begin{pmatrix}
	a_1 & 0 \\
	0 & d_1
	\end{pmatrix}
	$ , and write $M_1 = 
	\begin{pmatrix}
	a_2 & b_2 \\
	c_2 & d_2
	\end{pmatrix}
	$ and 
	$M_2 = 
	\begin{pmatrix}
	a_3 & b_3 \\
	c_3 & d_3
	\end{pmatrix}$ as before. Then $(a_1,d_1,a_2,b_2,c_2,d_2,a_3,b_3,c_3,d_3)$ must be a solution of the following system:
	\vspace*{1.5ex}
	
	$\begin{cases}
	a_1^3 + d_1^3 = 0\\
	a_2^3 + 3a_2b_2c_2 + 3b_2c_2d_2 + d_2^3 =0\\
	a_3^3 + 3a_3b_3c_3 + 3b_3c_3d_3 + d_3^3=0\\
	a_1a_2^2 + a_1b_2c_2 + b_2c_2d_1 + d_1d_2^2=0\\
	a_1^2a_2 + d_1^2d_2=0\\
	a_1a_3^2 + a_1b_3c_3 + b_3c_3d_1 + d_1d_3^2=0\\
	a_1^2a_3 + d_1^2d_3=0\\
	a_2a_3^2 + a_3b_3c_2 + a_3b_2c_3 + a_2b_3c_3 + b_3c_3d_2 + b_3c_2d_3 + b_2c_3d_3 + d_2d_3^2=0\\
	a_2^2a_3 + a_3b_2c_2 + a_2b_3c_2 + a_2b_2c_3 + b_3c_2d_2 + b_2c_3d_2 + b_2c_2d_3 + d_2^2d_3=0\\
	a_1a_2a_3 + a_1b_2c_3 + b_3c_2d_1 + d_1d_2d_3 = 1\\
	a_1a_2a_3 + a_1b_3c_2 + b_2c_3d_1 + d_1d_2d_3 =0
	\end{cases}$.
	\vspace*{1.5ex}
	
	One can show that this system has no solutions for example by computing a Gr\"obner basis in \textit{Macaulay2}. This leads to a contradiction.
\end{proof}
The next lemma concludes the second part of the proof of Theorem \ref{ClosedComplex}.
\begin{lemma} \label{appendix22N}
	For $N>3$, $e_{0\cdots01} \notin \IMPS(2,2,N)$. %More words?
\end{lemma}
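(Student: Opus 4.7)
The plan is to assume $W_N = T_N(M_0, M_1)$ and push the trace relations $\tr(M_0^{N-1}M_1) = 1$ together with $\tr(M_{i_1}\cdots M_{i_N}) = 0$ for every cyclic word not in the class of $(0,\ldots,0,1)$ until they become inconsistent. By Remark~\ref{JNF} I may take $M_0$ in Jordan form. The nilpotent case $\tr(M_0^N) = 0$ forces $M_0 = \left(\begin{smallmatrix}0&1\\0&0\end{smallmatrix}\right)$, so $M_0^{N-1} = 0$ for $N \geq 3$, contradicting $\tr(M_0^{N-1}M_1) = 1$; the scalar case $M_0 = \lambda I$ analogously forces $\lambda = 0$ and the same problem. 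Thus $M_0 = \diag(a,b)$ with $a, b$ distinct and nonzero, and setting $\zeta := b/a$, the relation $a^N + b^N = 0$ becomes $\zeta^N = -1$. Writing $M_1 = \left(\begin{smallmatrix}\alpha&\beta\\\gamma&\delta\end{smallmatrix}\right)$, the analysis then splits on whether $\beta\gamma$ vanishes.

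\emph{Case $\beta\gamma = 0$.} Then $M_1$ is (upper or lower) triangular, so every product $M_{i_1}\cdots M_{i_N}$ is triangular with diagonal entries $a^{N-j}\alpha^j$ and $b^{N-j}\delta^j$, where $j$ counts the $1$s in $(i_1,\dots,i_N)$. All the trace relations therefore collapse to $\alpha^j + \zeta^{N-j}\delta^j = 0$ for $2 \leq j \leq N$ together with $a^{N-1}(\alpha + \zeta^{N-1}\delta) = 1$. In particular $\delta \neq 0$; setting $r := \alpha/\delta$, the cases $j = N$ and $j = N-1$ give $r^N = -1$ and $r^{N-1} = -\zeta$, so $r = \zeta^{-1}$, and then $\alpha + \zeta^{N-1}\delta = \delta\zeta^{-1}(1 + \zeta^N) = 0$, contradicting the single-one equation.

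\emph{Case $\beta\gamma \neq 0$.} A direct matrix expansion gives
\[
\tr(M_0^p M_1 M_0^{N-2-p} M_1) = a^{N-2}\bigl(\alpha^2 + \zeta^{N-2}\delta^2 + (\zeta^p + \zeta^{N-2-p})\beta\gamma\bigr),
\]
which must vanish for every $0 \leq p \leq N-2$. Since $\beta\gamma \neq 0$, the quantity $\zeta^p + \zeta^{N-2-p}$ is independent of $p$; comparing $p = 0$ with $p = 1$ forces $\zeta^{N-3} = 1$ (using $\zeta \neq 1$), and for $N \geq 6$ comparing $p = 0$ with $p = 2$ forces $\zeta^2 = 1$ or $\zeta^{N-4} = 1$, the latter combining with the previous to give $\zeta = 1$---a contradiction. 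For $N = 4$ the first step already yields $\zeta = 1$, so the only surviving possibility is $\zeta = -1$ and $N$ odd (hence $\geq 5$). Then $\zeta^{N-2} = -1$ and $\zeta^p + \zeta^{N-2-p} = 0$, so $\alpha^2 = \delta^2$, and $a^{N-1}(\alpha+\delta) = 1$ forces $\alpha = \delta \neq 0$. With $\sigma_3 := \diag(1,-1)$ we have $M_0 = a\sigma_3$ and hence $M_0^k = a^k I$ for even $k$; so the vanishing of $\tr(M_0^k M_1^{N-k})$ for even $k \in \{0, 2, \ldots, N-3\}$ translates to $\tr(M_1^m) = 0$ for every odd $m$ with $3 \leq m \leq N$. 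Let $p_{\pm} = \alpha \pm \sqrt{\beta\gamma}$ be the eigenvalues of $M_1$ (both nonzero, else $\tr(M_1^3) = (2\alpha)^3 \neq 0$). The relations at $m = 3$ and $m = 5$ then give $(p_+/p_-)^3 = (p_+/p_-)^5 = -1$, whence $(p_+/p_-)^2 = 1$; so $p_+ = \pm p_-$, forcing either $\beta\gamma = 0$ or $\alpha = 0$, contradicting the standing assumptions.

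The principal obstacle is the second case, where the surviving possibility $\zeta = -1$ with $N$ odd must be killed by combining the two-ones relations (which pin down $\alpha = \delta$) with the odd-power relations $\tr(M_1^m) = 0$ extracted only from words with an even count of $M_0$'s; the parity bookkeeping is delicate, which is presumably why the authors' appendix argument runs Gröbner basis computations that must be tailored to each $N$ separately.
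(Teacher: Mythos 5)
Your proof is correct and follows essentially the same route as the paper's: put $M_0$ in Jordan form, dispose of the non-diagonalizable case via $\tr(M_0^N)=0$, reduce to $M_0=\diag(a,\zeta a)$ with $\zeta^N=-1$, split on whether the off-diagonal product of $M_1$ vanishes, and use the two-$M_1$ relations $\tr(M_0^pM_1M_0^{N-2-p}M_1)=0$ to force $\zeta=-1$ and $N$ odd. The only differences are in the endgame --- the paper closes the triangular case with the $j=2,3$ relations and the remaining case with the three-$M_1$ relations $\tr(M_0^tM_1^2M_0^{N-3-t}M_1)$ at $t=0,1$ (an immediate $2=-2$), rather than your $j=N-1,N$ relations and the eigenvalue argument from $\tr(M_1^3)=\tr(M_1^5)=0$; note also that, contrary to your closing remark, the paper's proof of this lemma is a uniform hand computation valid for all $N>3$ (the Gr\"obner basis computation is used only for the $(2,3,3)$ case).
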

%Be consistent with this factor 2!
\begin{proof}
Let us write $T=e_{0\cdots01}$ and assume that  $T\in \IMPS(2,2,N)$. Then we can write $T=T_N(M_0,M_1)$, for some $2 \times 2$-matrices $M_0$ and $M_1$, where $M_0$ is in Jordan normal form.\\
First suppose $M_0 = 
\begin{pmatrix}
a & 1 \\
0 & a
\end{pmatrix}
$ . 
Then since $\tr(M_0^N)=0$, we get $a=0$. But then also $\tr(M_0^{N-1}M_1)=0$. This is a contradiction with our assumption that $\tr(M_0^{N-1}M_1)=1$. \\
Thus, we can assume that $M_0 = 
\begin{pmatrix}
a & 0 \\
0 & d
\end{pmatrix}
$ and
write $M_1 = 
\begin{pmatrix}
A & B \\
C & D
\end{pmatrix}
$.\\
First note that since $\tr(M_0^N)=0$, we get that $a=\zeta d$, where $\zeta^N = -1$. It is clear that $a \neq 0$ and $d \neq 0$, since otherwise $M_0=0$. \\
The equation $\tr(M_0^{N-1}M_1)=1$ becomes
\begin{align}
& a^{N-1}A+d^{N-1}D=1 \nonumber \\
\iff & a^{N-1}A+a^{N-1}\zeta^{N-1}D=1 \nonumber \\
\implies & A+\zeta^{N-1}D \neq 0 \text{.} \label{tr1}
\end{align}
We also get that, for every $s \in \{0,1,\ldots N-2 \}$:
\begin{align}
& \tr(M_0^sM_1M_0^{N-2-s}M_1)=0 \nonumber \\ 
\iff & a^{N-2}A^2 + (a^sd^{N-2-s}+a^{N-2-s}d^s)BC + d^{N-2}D^2 = 0 \nonumber \\
\iff & A^2 + (\zeta^{s}+\zeta^{N-2-s})BC+\zeta^{N-2}D^2 = 0 \text{.} \label{tr2}
\end{align} 
Furthermore, for every $t \in \{0,1,\ldots N-3 \}$: 
\begin{align}
& \tr(M_0^tM_1^2M_0^{N-3-t}M_1) = 0 \nonumber \\
\iff & a^{N-3}A(A^2+BC) + (a^td^{N-3-t}+a^{N-3-t}d^t)BC(A+D)+d^{N-3}D(D^2+BC) = 0 \nonumber \\
\iff & A(A^2+BC)+(\zeta^t+\zeta^{N-3-t})BC(A+D)+\zeta^{N-3}D(D^2+BC) = 0 \text{.} \label{tr3}
\end{align}
We now show that the equalities (\ref{tr2}) and (\ref{tr3}), together with the inequality (\ref{tr1}), lead to a contradiction. The proof is not hard, but we need to distinguish some cases. In the proof it will turn out that we only need (\ref{tr2}) for $s \in \{0,1,2\}$, and (\ref{tr3}) for $t \in \{0,1\}$.
\begin{case}
	$BC = 0$.
\end{case}
Then we get that
\begin{align*}
A+\zeta^{N-1}D \neq 0 \\
A^2+\zeta^{N-2}D^2 = 0 \\
A^3+\zeta^{N-3}D^3 = 0
\end{align*}
So
\begin{align*}
& \zeta^{N-2}AD^2 = - A^3 = \zeta^{N-3}D^3 \\
\implies & D^2(\zeta A - D) = 0 \text{,}
\end{align*}
but this leads to a contradiction: either $\zeta A - D = 0$, but this is a contradiction with (\ref{tr1}); or $D=0$, which implies $A=0$ hence also yields a contradiction with~(\ref{tr1}). 

\begin{case}
	$BC \neq 0$.
\end{case}
Then (\ref{tr2}) tells us that for every $s \in \{0,1,\ldots N-2 \}$:
\begin{align*}
& 1 + \zeta^{N-2} = \zeta^s + \zeta^{N-2-s} \\
\implies & (1-\zeta^s)(1-\zeta^{N-2-s}) = 0 \text{.}
\end{align*}
Putting $s=1$ yields $\zeta^{N-3}=1$.\\
Putting $s=2$ yields $\zeta^{N-4}=1$ or $\zeta^2=1$. The former would imply $\zeta=1$, a contradiction. So we get $\zeta=-1$ (and $N$ odd). Note that here we used that $N \geq 4$. \\
Now (\ref{tr1}) tells us that $A+D \neq 0$. 
But then by (\ref{tr3}) for $t=0$ and $t=1$, we get $1+\zeta^{N-3}=\zeta+\zeta^{N-2}$. Since $\zeta=-1$ and $N$ is odd, that is a contradiction.
\end{proof}

\section{$\IMPSbis(2,2,4)$ as a constructible set}\label{appendix:224}

In this section we give a description of $\IMPS(2,2,4)$ as a constructible subset of $\Cyc^4(\CC^2) \cong \CC^6$. See \cite{webpageTim} for \textit{Macaulay2} code accompanying this section.\\
 Using the notation from Section \ref{section:def}, $\Cyc^4(\CC^2)$ has a basis given by $e_{0000}$, $e_{0001}$, $e_{0011}$, $e_{0111}$, $e_{1111}$, $e_{0101}$. We write coordinates on the space $\Cyc^4(\CC^2)$ by $x_{0000},\ldots,x_{0101}$.\\
The closure $\overline{\IMPS(2,2,4)}$ was already computed in \cite[Theorem 3]{CritchMorton}: it is a hypersurface cut out by the polynomial
\begin{dmath*}
	f_{224}=2x_{0011}^6-12x_{0001}x_{0011}^4x_{0111}+16x_{0001}^2x_{0011}^2x_{0111}^2+4x_{0000}x_{0011}^3x_{0111}^2-8x_{0000}x_{0001}x_{0011}x_{0111}^3+x_{0000}^2x_{0111}^4+4x_{0001}^2x_{0011}^3x_{1111}-x_{0000}x_{0011}^4x_{1111}-8x_{0001}^3x_{0011}x_{0111}x_{1111}+2x_{0000}x_{0001}^2x_{0111}^2x_{1111}+x_{0001}^4x_{1111}^2+8x_{0001}x_{0011}^3x_{0111}x_{0101}-16x_{0001}^2x_{0011}x_{0111}^2x_{0101}-4x_{0000}x_{0011}^2x_{0111}^2x_{0101}+4x_{0000}x_{0001}x_{0111}^3x_{0101}-4x_{0001}^2x_{0011}^2x_{1111}x_{0101}+4x_{0001}^3x_{0111}x_{1111}x_{0101}+8x_{0000}x_{0001}x_{0011}x_{0111}x_{1111}x_{0101}-2x_{0000}^2x_{0111}^2x_{1111}x_{0101}-2x_{0000}x_{0001}^2x_{1111}^2x_{0101}-x_{0011}^4x_{0101}^2+4x_{0001}^2x_{0111}^2x_{0101}^2+4x_{0000}x_{0011}x_{0111}^2x_{0101}^2+4x_{0001}^2x_{0011}x_{1111}x_{0101}^2-2x_{0000}x_{0011}^2x_{1111}x_{0101}^2-4x_{0000}x_{0001}x_{0111}x_{1111}x_{0101}^2+x_{0000}^2x_{1111}^2x_{0101}^2-2x_{0000}x_{0111}^2x_{0101}^3-2x_{0001}^2x_{1111}x_{0101}^3+x_{0000}x_{1111}x_{0101}^4
\end{dmath*}
as can be verified by a Gr\"obner basis computation, for example in \emph{Macaulay2}.\\
In principle, we could use \emph{TotalImage.m2} (see \cite{HarrisMichalekSertoz})  to compute the image of the trace parametrization map $\widetilde{T}:\CC^5 \to \Cyc^4(\CC^2)$. This computation did not finish in a reasonable amount of time. However, as we will explain now, one can exploit symmetries of $\IMPS(2,2,4)$ to simplify the computations.\\
Recall from Remark \ref{rmk:locsym} that $\IMPS(2,2,4)$ is invariant under the natural $GL_2$-action on $\Cyc^4(\CC^2)$. 
%For determining the set $\IMPS(2,2,4)$ it is enough to pick one point in every $GL_2$ orbit and check if it is in $\IMPS(2,2,4)$.
We use the following strategy
\begin{enumerate}
	\item Find a low-dimensional subset $Y \subseteq \Cyc^4(\CC^2)$ that contains at least one point from every $GL_2$-orbit.
	\item Use \emph{TotalImage.m2} to compute $Z=\widetilde{T}(\widetilde{T}^{-1}(Y))$.
	\item Compute $GL_2 \cdot (Y \cap Z)$. %{\color{red} Elaborate?}
\end{enumerate}
We now describe this in more detail. First, note that
$$\Cyc^4(\CC^2)\cong \Sym^4(\CC^2)\oplus \CC,$$ where the map from $\CC$ to $\Cyc^4(\CC^2)$ is given by sending $1$ to $e_{0011}-2e_{0101}$ and
the map from  $\Sym^4(\CC^2)$ to $\Cyc^4(\CC^2)$ is given by
\begin{align*}
x^4 & \mapsto e_{0000} \\
x^3y & \mapsto \frac{1}{4}e_{0001} \\
x^2y^2 & \mapsto \frac{1}{6}(e_{0011}+e_{0101}) \\
xy^3 & \mapsto \frac{1}{4}e_{0111} \\
y^4 & \mapsto e_{1111}.
\end{align*}
Because $\Sym^4(\CC^2)$ can be seen as the space of homogeneous degree $4$ polynomials in $2$ variables one can easily see that the following set 
contains exactly one representative of every $GL_2-$orbit 
$$\{x^4,x^3y,x^2y^2\}\cup\{xy(x-y)(x-\mu y)\,|\,\mu\in\CC\setminus \{1\}\}.$$  
%From this it follows that up to $GL_2$-action, every element of is equal to one of the following: 
%\begin{align*}
%\mbox{Type I: } e_{0000} + \lambda(e_{0011} - 2e_{0101}) \\
%\mbox{Type II: }\frac{1}{4}e_{0001} + \lambda(e_{0011} - 2e_{0101}) \\
%\mbox{Type III: }\frac{1}{6}(e_{0101}+e_{0011}) + \lambda(e_{0011} - 2e_{0101}) \\
%\mbox{Type IV: }\frac{1}{4}e_{0001} + (\mu + 1)\frac{1}{6}(e_{0101}+e_{0011}) + \mu\frac{1}{4}e_{0111} + \lambda(e_{0011} - 2e_{0101})
%\end{align*}
%It follows that if we define the following subsets of $\Cyc^4(\CC^2)$
%\begin{align*}
%Y_1 &= \{e_{0000} + \lambda(e_{0011} - 2e_{0101}) | \lambda \in \CC \}\\
%Y_2 &= \{\frac{1}{4}e_{0001} + \lambda(e_{0011} - 2e_{0101}) | \lambda \in \CC \} \\
%Y_3 &= \{\frac{1}{6}(e_{0101}+e_{0011}) + \lambda(e_{0011} - 2e_{0101}) | \lambda \in \CC \} \\
%Y_4 &= \{\frac{1}{4}e_{0001} - (\mu + 1)\frac{1}{6}(e_{0101}+e_{0011}) + \mu\frac{1}{4}e_{0111} + \lambda(e_{0011} - 2e_{0101}) | \lambda,\mu \in \CC \}
%\end{align*}
%then $Y=Y_1 \cup Y_2 \cup Y_3 \cup Y_4$ contains at least one point from every $GL_2$-orbit.
We can deduce that %({\color{red} I'm leaving out some computations here, maybe I should say at least something}) 
if we define the following subsets of $\Cyc^4(\CC^2)$
\begin{align*}
	Y_1 &= V(x_{0001},x_{0111},x_{1111},2x_{0011}+x_{0101})\\
	Y_2 &= V(x_{0000},x_{0111},x_{1111},2x_{0011}+x_{0101})\\
	Y_3 &= V(x_{0000},x_{0001},x_{0111},x_{1111})\\
	Y_4 &= V(x_{0000},x_{1111},2x_{0001}+2x_{0011}+2x_{0111}+x_{0101})
\end{align*}
then $Y=Y_1 \cup Y_2 \cup Y_3 \cup Y_4$ contains at least one point from every $GL_2$-orbit.
%Using $\textit{TotalImage.m2}$  we checked for every type which tensors are in $\IMPS(2,2,4)$ 
Using $\textit{TotalImage.m2}$, we computed $Z_i:=\widetilde{T}(\widetilde{T}^{-1}(Y_i))$ and compared it with $V(f_{224}) \cap Y_i$:
%\begin{align*}
%Z_1 &= \{e_{0000}\} &= V(f_{224}) \cap Y_1 \\
%Z_2 &= \emptyset & \neq \{e_{0001}\}=V(f_{224}) \cap Y_2 \\
%Z_3 &= \{e_{0101}\} & \neq \{e_{0101}, e_{0011}+\sqrt{2}e_{0101}, e_{0011}-\sqrt{2}e_{0101}\}=V(f_{224}) \cap Y_3\\
%Z_4 & &= V(f_{224}) \cap Y_4.
%\end{align*}
\begin{align*}
&Z_1 = \{e_{0000}\} = V(f_{224}) \cap Y_1 \\
&Z_2 = \emptyset  \neq \{e_{0001}\}=V(f_{224}) \cap Y_2 \\
&Z_3 = \{e_{0101}\}  \neq \{e_{0101}, e_{0011}+\sqrt{2}e_{0101}, e_{0011}-\sqrt{2}e_{0101}\}=V(f_{224}) \cap Y_3\\
&Z_4 = V(f_{224}) \cap Y_4.
\end{align*}
%\begin{enumerate}%[label=\Roman*.]
%	\item $e_{0000}$
%	\item $\emptyset$
%	\item $e_{0101}$
%	\item Every tensor that was in $\overline{\IMPS(2,2,4)}$ is already in $\IMPS(2,2,4)$.
%\end{enumerate}
We conclude the following: $\IMPS(2,2,4)$ is the vanishing locus of the polynomial $f$, with the orbits of the following three tensors removed: $e_{0001}$, $e_{0011}+\sqrt{2}e_{0101}$,  $e_{0011}-\sqrt{2}e_{0101}$.
One can easily compute the orbit closures: they are cut out by the following ideals:
\begin{dmath*}
	I_1\hiderel{=}(x_{0101}-x_{0011},
	3x_{0011}^2-4x_{0001}x_{0111}+x_{0000}x_{1111},\\
	2x_{0001}x_{0011}x_{0111}-3x_{0000}x_{0111}^2-3x_{0001}^2x_{1111}+4x_{0000}x_{0011}x_{1111},\\
	8x_{0001}^2x_{0111}^2-9x_{0000}x_{0011}x_{0111}^2-9x_{0001}^2x_{0011}x_{1111}+14x_{0000}x_{0001}x_{0111}x_{1111}-4x_{0000}^2x_{1111}^2)
\end{dmath*}
\begin{dmath*}
	I_2 = (2x_{0001}x_{0111}+(2+2\sqrt{2})x_{0011}^2-(8+5\sqrt{2})x_{0011}x_{0101}+(4+3\sqrt{2})x_{0101}^2,\\
	(1-\sqrt{2})x_{1111}x_{0011}-x_{1111}x_{0101}+\sqrt{2}x_{0111}^2,\\
	x_{1111}x_{0001}-(2+\sqrt{2})x_{0111}x_{0011}+(1+\sqrt{2})x_{0111}x_{0101},\\ x_{0000}x_{0011}-(1+\sqrt{2})x_{0000}x_{0101}+\sqrt{2}x_{0001}^2,\\
	x_{0000}x_{0111}-(2+\sqrt{2})x_{0001}x_{0011}+(1+\sqrt{2})x_{0001}x_{0101},\\
	x_{0000}x_{1111}-(6+4\sqrt{2})x_{0011}^2+(8+6\sqrt{2})x_{0011}x_{0101}-(3+2\sqrt{2})x_{0101}^2)
\end{dmath*}
\begin{dmath*}
	I_3 = (2x_{0001}x_{0111}+(2-2\sqrt{2})x_{0011}^2-(8-5\sqrt{2})x_{0011}x_{0101}+(4-3\sqrt{2})x_{0101}^2,\\
	(1+\sqrt{2})x_{1111}x_{0011}-x_{1111}x_{0101}-\sqrt{2}x_{0111}^2,\\
	x_{1111}x_{0001}-(2-\sqrt{2})x_{0111}x_{0011}+(1-\sqrt{2})x_{0111}x_{0101},\\ x_{0000}x_{0011}-(1-\sqrt{2})x_{0000}x_{0101}-\sqrt{2}x_{0001}^2,\\
	x_{0000}x_{0111}-(2-\sqrt{2})x_{0001}x_{0011}+(1-\sqrt{2})x_{0001}x_{0101},\\
	x_{0000}x_{1111}-(6-4\sqrt{2})x_{0011}^2+(8-6\sqrt{2})x_{0011}x_{0101}-(3-2\sqrt{2})x_{0101}^2).
\end{dmath*}
%\begin{dmath*}
%	I_2 = (2x_{0001}x_{0111}+2x_{0011}^2\sqrt{2}+2x_{0011}^2-5x_{0011}x_{0101}\sqrt{2}-8x_{0011}x_{0101}+3x_{0101}^2\sqrt{2}+4x_{0101}^2,
%	x_{1111}x_{0011}-x_{1111}x_{0101}\sqrt{2}-x_{1111}x_{0101}+x_{0111}^2\sqrt{2},
%	x_{1111}x_{0001}-x_{0111}x_{0011}\sqrt{2}-2x_{0111}x_{0011}+x_{0111}x_{0101}\sqrt{2}+x_{0111}x_{0101}, x_{0000}x_{0011}-x_{0000}x_{0101}\sqrt{2}-x_{0000}x_{0101}+x_{0001}^2\sqrt{2},
%	x_{0000}x_{0111}-x_{0001}x_{0011}\sqrt{2}-2x_{0001}x_{0011}+x_{0001}x_{0101}\sqrt{2}+x_{0001}x_{0101},
%	x_{0000}x_{1111}-4x_{0011}^2\sqrt{2}-6x_{0011}^2+6x_{0011}x_{0101}\sqrt{2}+8x_{0011}x_{0101}-2x_{0101}^2\sqrt{2}-3x_{0101}^2)
%\end{dmath*}
%Finally, one can check that $V(I_1)=O(e_{0001}) \cup O(e_{0000})$, $V(I_2)=O(e_{0011}+\sqrt{2}e_{0101}) \cup O(e_{0000})$, and $V(I_3)=O(e_{0011}-\sqrt{2}e_{0101}) \cup O(e_{0000})$. Now $O(e_{0000})$, is a closed orbit consisting of all rank 1 symmetric tensors in $\Cyc^4(\CC^2)$. Explicitly, it is cut out by the ideal
Finally, one can check that $V(I_1)=(GL_2\cdot{e_{0001}}) \cup (GL_2\cdot{e_{0000}}) $, $V(I_2)=(GL_2\cdot(e_{0011}+\sqrt{2}e_{0101}))  \cup (GL_2\cdot{e_{0000}}) $, and $V(I_3)=(GL_2\cdot(e_{0011}-\sqrt{2}e_{0101}))  \cup (GL_2\cdot{e_{0000}}) $. Now $GL_2\cdot{e_{0000}} $, is a closed orbit consisting of all rank 1 symmetric tensors in $\Cyc^4(\CC^2)$. Explicitly, it is cut out by the ideal
\begin{dmath*}
	J=(x_{0101}-x_{0011},x_{0000}x_{0011}-x_{0001}^2,x_{0000}x_{0111}-x_{0001}x_{0011},x_{0000}x_{1111}-x_{0001}x_{0111},x_{0001}x_{0111}-x_{0011}^2,x_{0001}x_{1111}-x_{0011}x_{0111},x_{0011}x_{1111}-x_{0111}^2).
\end{dmath*}
Finally, we obtain the following description of $\IMPS(2,2,4) \subseteq \Cyc^4(\CC^2)$ as a constructible set:
\[
\IMPS(2,2,4) = (V(f) \setminus (V(I_1) \cup V(I_2) \cup V(I_3))) \cup V(J).
\]

\section*{Acknowledgement}

We would like to express our gratitude to W.~Hackbusch for inspiring discussions.
The first author would like to thank Micha\l{} Farnik for useful discussion about the proof of the Theorem~\ref{baselocandsurj}. MM was supported by the Polish National Science Center Project2013/08/A/ST1/00804 affiliated at the University of Warsaw.
The third author would like to thank Jutho Haegeman and Frank Verstraete for several useful discussions.

\bibliography{mybib}{}
\bibliographystyle{plain}

\subsubsection*{Authors’ addresses:}
Adam Czapli\'nski, Universit\"at Siegen, Department Mathematik, Walter-Flex-Stra{\ss}e 3, 57068 Siegen, Germany
\newline \url{adam.czaplinski@uni-siegen.de}
\newline Mateusz Micha\l{}ek,  Max Planck Institute for Mathematics in the Sciences, \linebreak Inselstra{\ss}e 22, 04103 Leipzig, Germany and Aalto University, Espoo, Finland
\newline \url{michalek@mis.mpg.de} 
\newline Tim Seynnaeve,  Max Planck Institute for Mathematics in the Sciences, Inselstra{\ss}e 22, 04103 Leipzig, Germany
\newline \url{tim.seynnaeve@mis.mpg.de}

\end{document}